\begin{document}

\title*{Phase Transitions in Phase Retrieval}
\author{Dustin G.\ Mixon}
\institute{Department of Mathematics and Statistics\\Air Force Institute of Technology\\Wright-Patterson AFB, Ohio 45433}
\maketitle

\begin{abstract}
Consider a scenario in which an unknown signal is transformed by a known linear operator, and then the pointwise absolute value of the unknown output function is reported.
This scenario appears in several applications, and the goal is to recover the unknown signal -- this is called phase retrieval.
Phase retrieval has been a popular subject of research in the last few years, both in determining whether complete information is available with a given linear operator, and in finding efficient and stable phase retrieval algorithms in the cases where complete information is available.
Interestingly, there are a few ways to measure information completeness, and each way appears to be governed by a phase transition of sorts.
This chapter will survey the state of the art with some of these phase transitions, and identify a few open problems for further research.
\keywords{phase retrieval, phase transition, informationally complete, full spark, almost injectivity, unit norm tight frames}
\end{abstract}

\section{Introduction}

Various applications feature an inverse problem called \textit{phase retrieval}, in which one is given the pointwise absolute value of a known linear transformation of the desired signal.
Note that such information will never completely determine the unknown signal since, for example, negating the input will lead to the same output.
Indeed, the best one can do is recover $\{\omega x:|\omega|=1\}$ if $x$ is the unknown signal, but this global phase factor of ambiguity tends not to be an issue in application (for example, in some applications, the true signal is actually nonnegative everywhere, thereby removing all ambiguity).
What follows is a brief overview of some of the applications of phase retrieval:

\begin{itemize}
\item
\emph{Coherent diffractive imaging.}
This is a technique to image a nanoscale object by striking it with a highly coherent beam of X-rays to produce a diffraction pattern.
The diffraction pattern is the Fourier transform of the object, but only the intensity of the pattern can be physically measured (by counting photons in different regions)~\cite{BunkEtal:07,Harrison:93,MiaoISE:08,Millane:90}.
\item
\emph{Optics.}
This application enjoys various instances of phase retrieval:
(1) When imaging a star by a lens, one receives the pointwise absolute value of the Fourier transform of the desired pupil distribution~\cite{Walther:63}.
(2) For a high-resolution image, one can apply interferometric techniques to approximate the spatial coherence function (which is the Fourier transform of the desired object intensity), though the phase of this function is difficult to estimate accurately, so it is discarded~\cite{DaintyF:87}.
(3) Soon after NASA launched its Hubble Space Telescope, they discovered that the primary mirror in the telescope suffered from a large spherical aberration; the extent of this aberration was established by determining the pupil function from the intensity of its Fourier transform (the point spread function)~\cite{FienupMSS:93}.
\item
\emph{Quantum state tomography.}
When measuring a pure state (i.e., a unit vector) $x$ with a positive operator--valued measure of rank-$1$ elements $\{\varphi_n\varphi_n^*\}_{n=1}^N$ (i.e., the outer products of Parseval frame elements), the random outcome $Z$ of the measurement has a distribution given by $\operatorname{Pr}(Z=n)=|\langle x,\varphi_n\rangle|^2$.
As such, repeated measurements will produce an empirical estimate of the distribution, which is the pointwise absolute value (squared) of a linear transformation of the desired signal $x$~\cite{Finkelstein:04,FlammiaSC:05,HeinosaariMW:13}.
\item
\emph{Speech processing.}
One common method of speech signal denoising is to take the short-time Fourier transform (STFT) and perform a smoothing operation on the magnitudes of the coefficients~\cite{BalanCE:06}.
Instead of inverting the STFT with the noisy phases, one can recover the denoised version by phase retrieval~\cite{SunS:12}. 
\end{itemize}

Though there are many applications of phase retrieval, the task is often impossible; in particular, discarding the phases of the Fourier transform is not at all injective.
This fact has led many researchers to invoke a priori knowledge of the desired signal, since injectivity might be gotten by restricting to a smaller signal class.
For example, for the optics applications, the pupil distribution is only supported within the aperture of the optical system, and so this compact-support constraint combined with the intensity measurements might uniquely determine the desired signal.
Introducing such information has led to various ad hoc phase retrieval algorithms, and while they have found some success (e.g., in correcting the Hubble Space Telescope), such algorithms often fail to work unexpectedly.
Overall, this route has yet to produce algorithms with practical performance guarantees.

Thankfully, an alternative route was introduced in 2006 by Balan, Casazza and Edidin~\cite{BalanCE:06}:
Seek injectivity, not by restricting to a smaller signal class, but rather by designing a larger ensemble of measurement vectors.
Unbeknownst to Balan et al.\ at the time, this idea had already been in the air in the quantum community (for quantum state tomography~\cite{Finkelstein:04,FlammiaSC:05}), but posing the idea to the signal processing community led to a flurry of research in search of practical phase retrieval guarantees~\cite{AlexeevBFM:12,Balan:12,BalanBCE:09,BandeiraCM:13,CandesESV:13,CandesL:12,CandesLS:13,CandesSV:13,DemanetH:12,EldarM:12,Voroninski:12b,WaldspurgerAM:12}.
One popular method called \emph{PhaseLift} recasts phase retrieval as a semidefinite program~\cite{CandesESV:13,CandesL:12,CandesLS:13,CandesSV:13}, another called \emph{PhaseCut} reformulates it in terms of MaxCut~\cite{Voroninski:12b,WaldspurgerAM:12}, and yet another uses the polarization identity along with angular synchronization to quickly solve certain instances~\cite{AlexeevBFM:12,BandeiraCM:13}.
In this same line of research, a new methodology for coherent diffractive imaging emerged~\cite{CandesESV:13}:
Instead of taking a single exposure and attempting phase retrieval with possibly incomplete information, take multiple exposures of the same object with different masks or diffraction gratings.
Not only can such a process produce complete information, there are also provably efficient (and apparently stable) phase retrieval algorithms for this setting~\cite{BandeiraCM:13,CandesLS:13}.
Considering phase retrieval has a wide range of applications, it would be interesting to find other areas to apply this philosophy of taking more measurements to obtain injectivity.

Another effect of the paper~\cite{BalanCE:06} by Balan, Casazza and Edidin was the community's desire for a deeper understanding of injectivity for phase retrieval.
In particular, what are the conditions for injectivity, and how many measurements are required?
Understanding this will help in determining how many (and what type of) exposures are necessary for coherent diffractive imaging; also, such phase retrieval results can be directly interpreted as fundamental limits of quantum state tomography.
The purpose of this chapter is to survey several results along these lines, and to identity some remaining open problems.
Section 2 focuses on injectivity in both the real and complex cases, and then the third section considers a relaxed version of injectivity called \emph{almost injectivity}.
First, we discuss the notation we use throughout this chapter as well as some preliminaries:

\subsection{Notation and Preliminaries}

Given a collection of vectors $\Phi=\{\varphi_n\}_{n=1}^N$ in $V=\mathbb{R}^M$ or $\mathbb{C}^M$, we will identify such a collection with the $M\times N$ matrix whose columns form the collection.
Consider the intensity measurement process defined by 
\begin{equation*}
(\mathcal{A}(x))(n):=|\langle x,\varphi_n\rangle|^2.
\end{equation*}
Note that $\mathcal{A}(x)=\mathcal{A}(y)$ whenever $y=cx$ for some scalar $c$ of unit modulus.
As such, the mapping $\mathcal{A}\colon V\rightarrow\mathbb{R}^N$ is necessarily not injective.
To resolve this (technical) issue, throughout this chapter, we consider sets of the form $V/S$, where $V$ is a vector space and $S$ is a multiplicative subgroup of the field of scalars.
By this notation, we mean to identify vectors $x,y\in V$ for which there exists a scalar $c\in S$ such that $y=cx$; we write $y\equiv x\bmod S$ to convey this identification.
Most (but not all) of the time, $V/S$ is either $\mathbb{R}^M/\{\pm1\}$ or $\mathbb{C}^M/\mathbb{T}$ (here, $\mathbb{T}$ is the complex unit circle), and we view the intensity measurement process as a mapping $\mathcal{A}\colon V/S\rightarrow\mathbb{R}^N$; it is in this way that we will consider the measurement process to be injective.

As the title suggests, the focus of this chapter is phase transitions in phase retrieval.
As an example of a phase transition, consider what it takes for a collection of members of a vector space $V$ to span $V$.
Certainly, no collection of size less than the dimension of $V$ has a chance of spanning, and we expect most collections of size at least the dimension to span.
As such, we might say that the notion of spanning $V$ exhibits a phase transition at the dimension of $V$.
We make this definition explicit in the following:

\begin{definition}
Let $\mathrm{A}[\Phi;\mathbb{F}^{M\times N}]$ be a statement about a matrix $\Phi\in\mathbb{F}^{M\times N}$, and consider a function $f\colon\mathbb{N}\rightarrow\mathbb{N}$. 
We say $\mathrm{A}[\Phi;\mathbb{F}^{M\times N}]$ exhibits a \emph{phase transition} at $N=f(M)$ if for each $M\geq2$,
\begin{itemize}
\item[(a)]
$\mathrm{A}[\Phi;\mathbb{F}^{M\times N}]$ does not hold whenever $N<f(M)$, and
\item[(b)]
for each $N\geq f(M)$, there exists an open, dense subset $S\subseteq\mathbb{F}^{M\times N}$ such that $\mathrm{A}[\Phi;\mathbb{F}^{M\times N}]$ holds for every $\Phi\in S$.
\end{itemize}
\end{definition}

Based on experience, both parts (a) and (b) of a phase transition are established by first studying necessary and sufficient conditions for the property of interest $\mathrm{A}[\Phi;\mathbb{F}^{M\times N}]$.
For part (b) in particular, algebraic geometry consistently plays a key role.
Viewing $\Phi\in\mathbb{F}^{M\times N}$ as a point in real Euclidean space, consider the collection of $\Phi$'s for which $\mathrm{A}[\Phi;\mathbb{F}^{M\times N}]$ does not hold.
If there exists a real, nontrivial algebraic variety that contains these points (which is often the case), then every point $\Phi$ in the complement of the variety (which is open and dense in $\mathbb{F}^{M\times N}$) satisfies $\mathrm{A}[\Phi;\mathbb{F}^{M\times N}]$.
As such, for part (b), it suffices to identify the appropriate variety.

Throughout this chapter, we will continually follow a certain procedure for studying phase transitions.
Given a property $\mathrm{A}[\Phi;\mathbb{F}^{M\times N}]$, we start by studying various necessary and sufficient conditions for that property.
We then attempt to prove a phase transition $N=f(M)$ using the conditions available.
Later, we consider explicit constructions of $M\times f(M)$ matrices which satisfy $\mathrm{A}[\Phi;\mathbb{F}^{M\times f(M)}]$; these minimal constructions are certainly mathematically interesting, and they are also optimal measurement designs for applications like quantum state tomography.

\section{Injectivity}

In this section, we study the phase transition for injectivity.
As we will see, this phase transition is much better understood in the real case than in the complex case, and the distinction is rather interesting.
It is highly recommended that the reader enjoys the real case before venturing into the complex case.

\subsection{Injectivity in the Real Case}

We start by defining the important concepts of this subsection:

\begin{definition}
\
\begin{itemize}
\item[(a)]
$\mathrm{Inj}[\Phi;\mathbb{R}^{M\times N}]$ denotes the statement that $\mathcal{A}\colon\mathbb{R}^M/\{\pm1\}\rightarrow\mathbb{R}^N$ defined by $(\mathcal{A}(x))(n):=|\langle x,\varphi_n\rangle|^2$ is injective, where $\Phi=\{\varphi_n\}_{n=1}^N\subseteq\mathbb{R}^M$.
\item[(b)]
$\mathrm{CP}[\Phi;\mathbb{F}^{M\times N}]$ denotes the statement that $\Phi=\{\varphi_n\}_{n=1}^N\subseteq\mathbb{F}^M$ satisfies the \emph{complement property}: for every $S\subseteq\{1,\ldots,N\}$, either $\{\varphi_n\}_{n\in S}$ or $\{\varphi_n\}_{n\in S^\mathrm{c}}$ spans $\mathbb{F}^M$.
\end{itemize}
\end{definition}

Interestingly, the complement property characterizes injectivity in the real case.
Much of the work in this chapter was inspired by the proof of the following result:

\begin{theorem}[Theorem~2.8 in~\cite{BalanCE:06}]
\label{thm.complement property characterization}
$\mathrm{Inj}[\Phi;\mathbb{R}^{M\times N}]$ if and only if $\mathrm{CP}[\Phi;\mathbb{R}^{M\times N}]$.
In words, $\mathcal{A}$ is injective if and only if $\Phi$ satisfies the complement property.
\end{theorem}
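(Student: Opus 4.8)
The plan is to reduce both injectivity and the complement property to a single common statement about the nonexistence of a certain pair of nonzero vectors. The starting observation is the algebraic factorization, valid over $\mathbb{R}$,
\begin{equation*}
|\langle x,\varphi_n\rangle|^2-|\langle y,\varphi_n\rangle|^2=\langle x-y,\varphi_n\rangle\,\langle x+y,\varphi_n\rangle.
\end{equation*}
Thus $\mathcal{A}(x)=\mathcal{A}(y)$ holds precisely when, for every index $n$, at least one of $\langle x-y,\varphi_n\rangle$ and $\langle x+y,\varphi_n\rangle$ vanishes.

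Next I would introduce the change of variables $u:=x-y$ and $v:=x+y$; since $x=(u+v)/2$ and $y=(v-u)/2$, the map $(x,y)\mapsto(u,v)$ is a bijection of $\mathbb{R}^M\times\mathbb{R}^M$ onto itself. Under this substitution, the relation $x\equiv y\bmod\{\pm1\}$ (that is, $y=\pm x$) translates into ``$u=0$ or $v=0$.'' Consequently, $\mathcal{A}$ fails to be injective on $\mathbb{R}^M/\{\pm1\}$ exactly when there exist \emph{nonzero} $u,v\in\mathbb{R}^M$ with $\langle u,\varphi_n\rangle\langle v,\varphi_n\rangle=0$ for every $n$. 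This disjunctive condition is the common reformulation I want to exploit.

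To close the loop with the complement property, given such a pair $(u,v)$ I would set $S:=\{n:\langle u,\varphi_n\rangle=0\}$. Then $u$ is a nonzero vector orthogonal to $\{\varphi_n\}_{n\in S}$, so that family fails to span $\mathbb{R}^M$; and on $S^{\mathrm c}$ the product condition forces $\langle v,\varphi_n\rangle=0$, so the nonzero vector $v$ witnesses that $\{\varphi_n\}_{n\in S^{\mathrm c}}$ also fails to span. Hence a failure of injectivity yields a failure of the complement property. Conversely, if the complement property fails for some $S$, I would choose a nonzero $u$ in the orthogonal complement of $\operatorname{span}\{\varphi_n\}_{n\in S}$ and a nonzero $v$ in the orthogonal complement of $\operatorname{span}\{\varphi_n\}_{n\in S^{\mathrm c}}$; these satisfy the product condition, recovering a failure of injectivity. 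Taking contrapositives in both directions gives the stated equivalence.

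The main obstacle is not any single hard step but rather getting the logical packaging exactly right: the genuine content is the factorization together with the substitution, after which everything reduces to the elementary fact that a family spans $\mathbb{R}^M$ if and only if its orthogonal complement is trivial. I would be careful to produce the two nonzero vectors $u$ and $v$ \emph{independently}, one from each half of the partition, since it is precisely this freedom to choose them separately that matches the ``either/or'' structure of the complement property. I should also emphasize that the argument uses the reality of the scalars in an essential way --- the factorization of a difference of squares into linear factors over $\mathbb{R}$ has no direct analogue over $\mathbb{C}$ --- which already foreshadows why the complex case will require a different treatment.
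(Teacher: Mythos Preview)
Your proposal is correct and follows essentially the same approach as the paper's proof: both argue by contrapositive, pass between the pair $(x,y)$ and the pair $(u,v)=(x-y,x+y)$, and partition the index set according to which of the two inner products vanishes. The only cosmetic difference is that you package the key identity as the factorization $|\langle x,\varphi_n\rangle|^2-|\langle y,\varphi_n\rangle|^2=\langle x-y,\varphi_n\rangle\langle x+y,\varphi_n\rangle$, whereas the paper expands $|\langle u\pm v,\varphi_n\rangle|^2$ and observes that the cross term vanishes; these are two views of the same computation.
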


\begin{proof}
We will prove both directions by obtaining the contrapositives.

($\Rightarrow$) 
Assume $\Phi$ does not satisfy the complement property.
Then there exists $S\subseteq\{1,\ldots,N\}$ such that neither $\{\varphi_n\}_{n\in S}$ nor $\{\varphi_n\}_{n\in S^\mathrm{c}}$ spans $\mathbb{R}^M$. 
This implies that there are nonzero vectors $u,v\in\mathbb{R}^M$ such that $\langle u,\varphi_n\rangle=0$ for all $n\in S$ and $\langle v,\varphi_n\rangle=0$ for all $n\in S^\mathrm{c}$.
For each $n$, we then have
\begin{align*}
|\langle u\pm v,\varphi_n\rangle|^2
&=|\langle u,\varphi_n\rangle|^2\pm2\operatorname{Re}\langle u,\varphi_n\rangle\overline{\langle v,\varphi_n\rangle}+|\langle v,\varphi_n\rangle|^2\\
&=|\langle u,\varphi_n\rangle|^2+|\langle v,\varphi_n\rangle|^2.
\end{align*}
Since $|\langle u+v,\varphi_n\rangle|^2=|\langle u-v,\varphi_n\rangle|^2$ for every $n$, we have $\mathcal{A}(u+v)=\mathcal{A}(u-v)$. 
Moreover, $u$ and $v$ are nonzero by assumption, and so $u+v\neq\pm(u-v)$.

($\Leftarrow$) 
Assume that $\mathcal{A}$ is not injective. 
Then there exist vectors $x,y\in\mathbb{R}^M$ such that $x\neq\pm y$ and $\mathcal{A}(x)=\mathcal{A}(y)$. 
Taking $S:=\{n:\langle x,\varphi_n\rangle=-\langle y,\varphi_n\rangle\}$, we have $\langle x+y,\varphi_n\rangle=0$ for every $n\in S$.
Otherwise when $n\in S^\mathrm{c}$, we have $\langle x,\varphi_n\rangle=\langle y,\varphi_n\rangle$ and so $\langle x-y,\varphi_n\rangle=0$.
Furthermore, both $x+y$ and $x-y$ are nontrivial since $x\neq\pm y$, and so neither $\{\varphi_n\}_{n\in S}$ nor $\{\varphi_n\}_{n\in S^\mathrm{c}}$ spans $\mathbb{R}^M$.
\qquad
\end{proof}

Having identified an equivalent condition (the complement property) to injectivity in the real case, we now use this condition to identify the phase transition.
First, we note that a spanning set for $\mathbb{R}^M$ must have size at least $M$.
As such, we know $\mathrm{CP}[\Phi;\mathbb{R}^{M\times N}]$ does not hold whenever $N<2M-1$, since taking $S$ to be the first $M-1$ members will leave $S^\mathrm{c}$ with $\leq M-1$ members.
This suggests a phase transition of $N=2M-1$, but it remains to prove part (b).
To get this, we first introduce the notion of full spark:
An $M\times N$ matrix with $M\leq N$ is said to be \emph{full spark} if every $M\times M$ submatrix is invertible.
Note that any full spark $\Phi$ with $N\geq 2M-1$ necessarily satisfies the complement property, since the larger of $S$ and $S^\mathrm{c}$ necessarily has at least $M$ elements, which necessarily span.
As such, it suffices to show that full spark matrices form an open and dense subset.
To this end, we note that the product of the determinants of all $M\times M$ submatrices forms a polynomial of the matrix entries whose zero set contains every $\Phi$ such that $\mathrm{CP}[\Phi;\mathbb{R}^{M\times N}]$ does not hold.
Moreover, this polynomial is nonzero since the Vandermonde matrix
\begin{equation*}
\left[
\begin{array}{cccc}
1&1&\cdots&1\\
1&2&\cdots&N\\
\vdots&\vdots&&\vdots\\
1^{M-1}&2^{M-1}&\cdots&N^{M-1}
\end{array}
\right]
\end{equation*}
is full spark (why?).
The complement of this polynomial's zero set is therefore open and dense, as desired.
This implies our first phase transition result:

\begin{theorem}[essentially proved in~\cite{BalanCE:06}]
$\mathrm{Inj}[\Phi;\mathbb{R}^{M\times N}]$ exhibits a phase transition at $N=2M-1$.
\end{theorem}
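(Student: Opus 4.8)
The plan is to leverage Theorem~\ref{thm.complement property characterization}, which shows that $\mathrm{Inj}[\Phi;\mathbb{R}^{M\times N}]$ and $\mathrm{CP}[\Phi;\mathbb{R}^{M\times N}]$ are equivalent; it therefore suffices to verify both parts of the phase-transition definition for the complement property with $f(M)=2M-1$. For part (a), I would argue by pigeonhole. When $N<2M-1$, that is $N\leq 2M-2$, any partition $\{1,\ldots,N\}=S\cup S^{\mathrm c}$ has $\min(|S|,|S^{\mathrm c}|)\leq M-1$; taking $S$ to be, say, the first $M-1$ indices forces $|S^{\mathrm c}|\leq M-1$ as well. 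Since a collection of at most $M-1$ vectors cannot span $\mathbb{R}^M$, neither block spans, so $\mathrm{CP}$ fails for every $\Phi\in\mathbb{R}^{M\times N}$, which is exactly part (a).

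For part (b), the natural candidate for the required open dense subset is the set of full spark matrices, and the first thing to check is that full spark implies $\mathrm{CP}$ once $N\geq 2M-1$. Indeed, for any partition $S\cup S^{\mathrm c}$, the larger block contains at least $\lceil N/2\rceil\geq M$ indices, and any $M$ of the associated columns form an invertible $M\times M$ submatrix by definition of full spark, so that block spans $\mathbb{R}^M$. The remaining and substantive task is to show that the full spark matrices are open and dense in $\mathbb{R}^{M\times N}$. The key device is the polynomial $p(\Phi):=\prod_{|T|=M}\det(\Phi_T)$, where the product runs over all $M$-element subsets $T$ of the columns and $\Phi_T$ denotes the corresponding $M\times M$ submatrix. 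A matrix is full spark precisely when no minor vanishes, i.e.\ precisely when $p(\Phi)\neq 0$; since $p$ is a polynomial (hence continuous) in the entries of $\Phi$, the full spark set $\{\,\Phi:p(\Phi)\neq 0\,\}$ is automatically open.

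The main obstacle is density, which reduces to confirming that $p$ is not the identically-zero polynomial. This is precisely the point of the Vandermonde matrix displayed above: any $M\times M$ submatrix obtained by selecting columns $j_1<\cdots<j_M$ from $\{1,\ldots,N\}$ is itself a Vandermonde matrix on the distinct nodes $j_1,\ldots,j_M$, so its determinant equals $\prod_{k<\ell}(j_\ell-j_k)\neq 0$; hence this matrix is full spark and $p\not\equiv 0$. Because a nonzero real polynomial cannot vanish on any nonempty open set, its zero set $Z(p)$ has empty interior, so the complement $\{\,p\neq 0\,\}$ is dense as well as open. This open dense set of full spark matrices satisfies $\mathrm{CP}$, hence $\mathrm{Inj}$, for every $N\geq 2M-1$, completing part (b) and establishing the phase transition at $N=2M-1$.
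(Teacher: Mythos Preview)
Your proposal is correct and follows essentially the same approach as the paper: reduce to the complement property via Theorem~\ref{thm.complement property characterization}, handle part~(a) by the pigeonhole argument on $S$ and $S^{\mathrm c}$, and establish part~(b) by showing that full spark matrices form an open dense set using the product-of-minors polynomial together with the Vandermonde example. The only cosmetic difference is that you spell out the density step (nonzero polynomial $\Rightarrow$ zero set has empty interior) a bit more explicitly than the paper does.
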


Now that we have identified the phase transition, we consider the minimal constructions, i.e., the $M\times (2M-1)$ real matrices which satisfy the complement property.
Here, we note that for every $S$ of size $M$, $S^\mathrm{c}$ has size $M-1$, meaning $S$ must index a spanning set.
As such, the matrices in this extreme case are precisely the $M\times (2M-1)$ full spark matrices.
For more information about full spark matrices, see~\cite{AlexeevCM:12}.

\subsection{Injectivity in the Complex Case}

In the previous subsection, we quickly identified a characterization of injectivity in the real case that enabled the phase transition of interest to be completely studied.
The complex case appears to be a bit more involved.
For example, the actual phase transition is the subject of an open conjecture, though there has been a lot of progress on this conjecture recently.
Since the complex case is so much more involved, this subsection is broken into different labeled parts, concerning necessary and sufficient conditions, the phase transition, and minimal constructions.

\subsubsection{Conditions for Injectivity in the Complex Case}

We begin by defining our symbol for injectivity in the complex case:

\begin{definition}
$\mathrm{Inj}[\Phi;\mathbb{C}^{M\times N}]$ denotes the statement that $\mathcal{A}\colon\mathbb{C}^M/\mathbb{T}\rightarrow\mathbb{R}^N$ defined by $(\mathcal{A}(x))(n):=|\langle x,\varphi_n\rangle|^2$ is injective, where $\Phi=\{\varphi_n\}_{n=1}^N\subseteq\mathbb{C}^M$.
\end{definition}

What follows is a characterization of injectivity in the complex case:

\begin{theorem}[Theorem~4 in~\cite{BandeiraCMN:13}]
\label{thm.complex injective}
Consider $\Phi=\{\varphi_n\}_{n=1}^N\subseteq\mathbb{C}^M$, and viewing $\{\varphi_n\varphi_n^* u\}_{n=1}^N$ as vectors in $\mathbb{R}^{2M}$, denote $S(u):=\operatorname{span}_\mathbb{R}\{\varphi_n\varphi_n^* u\}_{n=1}^N$.
Then the following are equivalent:
\begin{itemize}
\item[(a)] $\mathrm{Inj}[\Phi;\mathbb{C}^{M\times N}]$.
\item[(b)] $\operatorname{dim}S(u)\geq 2M-1$ for every $u\in\mathbb{C}^M\setminus\{0\}$.
\item[(c)] $S(u)=\operatorname{span}_\mathbb{R}\{\mathrm{i}u\}^\perp$ for every $u\in\mathbb{C}^M\setminus\{0\}$.
\end{itemize}
\end{theorem}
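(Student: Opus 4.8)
The plan is to reduce everything to understanding the real-orthogonal complement $S(u)^\perp$ in $\mathbb{R}^{2M}$, in direct analogy with the proof of Theorem~\ref{thm.complement property characterization}. First I would record the expansion
\begin{equation*}
|\langle u\pm v,\varphi_n\rangle|^2=|\langle u,\varphi_n\rangle|^2\pm2\operatorname{Re}\big(\langle u,\varphi_n\rangle\overline{\langle v,\varphi_n\rangle}\big)+|\langle v,\varphi_n\rangle|^2,
\end{equation*}
together with the observation that $\operatorname{Re}(\langle u,\varphi_n\rangle\overline{\langle v,\varphi_n\rangle})$ is exactly the real inner product of $\varphi_n\varphi_n^*u$ with $v$ under the identification $\mathbb{C}^M\cong\mathbb{R}^{2M}$. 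This identifies $S(u)^\perp$ as precisely the set of $v$ for which $\operatorname{Re}(\langle u,\varphi_n\rangle\overline{\langle v,\varphi_n\rangle})=0$ for all $n$, equivalently the set of $v$ for which $\mathcal{A}(u+v)=\mathcal{A}(u-v)$. Everything hinges on this dictionary between the subspace $S(u)$ and the perturbations that preserve the intensities.

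Before treating injectivity, I would dispose of the equivalence of (b) and (c), which is pure linear algebra. Taking $v=\mathrm{i}u$ makes $\langle u,\varphi_n\rangle\overline{\langle v,\varphi_n\rangle}=-\mathrm{i}|\langle u,\varphi_n\rangle|^2$ purely imaginary for every $n$, so the \emph{trivial phase direction} $\mathrm{i}u$ always lies in $S(u)^\perp$. Hence $S(u)\subseteq\operatorname{span}_{\mathbb{R}}\{\mathrm{i}u\}^\perp$ and $\operatorname{dim}S(u)\leq2M-1$ for every $u\neq0$. Thus (b), which asserts $\operatorname{dim}S(u)\geq2M-1$, forces $\operatorname{dim}S(u)=2M-1$; since $S(u)$ already sits inside the $(2M-1)$-dimensional space $\operatorname{span}_{\mathbb{R}}\{\mathrm{i}u\}^\perp$, equality of dimensions upgrades to equality of subspaces, giving (c), and the reverse implication is immediate.

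The heart of the argument is the equivalence of (a) and (b), which I would prove in contrapositive form via the substitution $x=u+v$, $y=u-v$ (equivalently $u=\tfrac12(x+y)$, $v=\tfrac12(x-y)$). If (b) fails, there is some $u\neq0$ with $\operatorname{dim}S(u)^\perp\geq2$, so $S(u)^\perp$ contains a vector $v\notin\operatorname{span}_{\mathbb{R}}\{\mathrm{i}u\}$; then $\mathcal{A}(u+v)=\mathcal{A}(u-v)$ by the first paragraph, and it remains only to verify $u+v\not\equiv u-v\bmod\mathbb{T}$ to conclude non-injectivity. Conversely, if $\mathcal{A}$ is not injective, I would pick $x\not\equiv y\bmod\mathbb{T}$ with $\mathcal{A}(x)=\mathcal{A}(y)$, set $u,v$ as above so that the identity places $v\in S(u)^\perp$, and then check that $u\neq0$ and $v\notin\operatorname{span}_{\mathbb{R}}\{\mathrm{i}u\}$, whence $\mathrm{i}u$ and $v$ are independent and $\operatorname{dim}S(u)^\perp\geq2$, so (b) fails.

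The main obstacle -- and the only place the complex structure genuinely enters -- is the arithmetic lemma feeding both directions: for $u\neq0$, one has $u+v\equiv u-v\bmod\mathbb{T}$ if and only if $v\in\operatorname{span}_{\mathbb{R}}\{\mathrm{i}u\}$. I expect to establish this by writing $u+v=\omega(u-v)$ with $|\omega|=1$, solving for $v=-\tfrac{1-\omega}{1+\omega}u$, and computing $\tfrac{1-\omega}{1+\omega}=-\mathrm{i}\tan(\theta/2)$ for $\omega=e^{\mathrm{i}\theta}\neq-1$, which exhibits $v$ as a real multiple of $\mathrm{i}u$; the degenerate cases $\omega=-1$ (which forces $u=0$) and $u\pm v=0$ must be checked separately but are quickly ruled out. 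This lemma is exactly what makes ``$u=0$'' and ``$v\parallel\mathrm{i}u$'' the two collapse scenarios, and it is the natural complex analogue of the way $x=\pm y$ collapses in the real case, reflecting that the group $\{\pm1\}$ has been replaced by the full circle $\mathbb{T}$.
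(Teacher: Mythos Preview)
Your proposal is correct and follows essentially the same approach as the paper: both hinge on the identity $|\langle u+v,\varphi_n\rangle|^2-|\langle u-v,\varphi_n\rangle|^2=4\langle\varphi_n\varphi_n^*u,v\rangle_{\mathbb{R}}$ to identify $S(u)^\perp$ with the perturbations $v$ satisfying $\mathcal{A}(u+v)=\mathcal{A}(u-v)$, and both reduce the question to the M\"obius-type computation that $u+v\equiv u-v\bmod\mathbb{T}$ (with $u\neq0$) iff $v\in\operatorname{span}_{\mathbb{R}}\{\mathrm{i}u\}$. The only difference is organizational: the paper routes the implications as (a)$\Rightarrow$(c), (b)$\Leftrightarrow$(c), (c)$\Rightarrow$(a) and for the last step applies (c) at $u=x-y$, whereas you isolate the arithmetic lemma explicitly and prove (a)$\Leftrightarrow$(b) via the symmetric substitution $u=\tfrac12(x+y)$, $v=\tfrac12(x-y)$---but the content is identical.
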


Before proving this theorem, note that unlike the characterization in the real case, it is not clear whether this characterization can be tested in finite time; instead of being a statement about all (finitely many) partitions of $\{1,\ldots,N\}$, this is a statement about all $u\in\mathbb{C}^M\setminus\{0\}$.
However, we can view this characterization as an analog to the real case in some sense:
In the real case, the complement property is equivalent to having $\operatorname{span}\{\varphi_n\varphi_n^* u\}_{n=1}^N=\mathbb{R}^M$ for all $u\in\mathbb{R}^M\setminus\{0\}$.
As the following proof makes precise, the fact that $\{\varphi_n\varphi_n^* u\}_{n=1}^N$ fails to span all of $\mathbb{R}^{2M}$ is rooted in the fact that more information is lost with phase in the complex case.
There is actually a nice differential geometric interpretation of this result, and we will discuss it after the proof:

\begin{proof}[Proof of Theorem~\ref{thm.complex injective}]
(a) $\Rightarrow$ (c): 
Suppose $\mathcal{A}$ is injective.
We need to show that $\{\varphi_n\varphi_n^* u\}_{n=1}^N$ spans the set of vectors orthogonal to $\mathrm{i}u$.
Here, orthogonality is with respect to the real inner product, which can be expressed as $\langle a,b\rangle_\mathbb{R}=\operatorname{Re}\langle a,b\rangle$.
Note that
\begin{equation*}
|\langle u\pm v,\varphi_n\rangle|^2
=|\langle u,\varphi_n\rangle|^2\pm2\operatorname{Re}\langle u,\varphi_n\rangle\langle \varphi_n,v\rangle+|\langle v,\varphi_n\rangle|^2,
\end{equation*}
and so subtraction gives
\begin{equation}
\label{eq.injectivity to orthogonality}
|\langle u+v,\varphi_n\rangle|^2-|\langle u-v,\varphi_n\rangle|^2
=4\operatorname{Re}\langle u,\varphi_n\rangle\langle \varphi_n,v\rangle
=4\langle \varphi_n\varphi_n^*u,v\rangle_\mathbb{R}.
\end{equation}
In particular, if the right-hand side of \eqref{eq.injectivity to orthogonality} is zero, then injectivity implies that there exists some $\omega$ of unit modulus such that $u+v=\omega(u-v)$.
Since $u\neq0$, we know $\omega\neq-1$, and so rearranging gives
\begin{equation*}
v
=-\frac{1-\omega}{1+\omega}u
=-\frac{(1-\omega)(1+\overline{\omega})}{|1+\omega|^2}u
=-\frac{2\operatorname{Im}\omega}{|1+\omega|^2}~\mathrm{i}u.
\end{equation*}
This means $S(u)^\perp\subseteq\operatorname{span}_\mathbb{R}\{\mathrm{i}u\}$.
To prove $\operatorname{span}_\mathbb{R}\{\mathrm{i}u\}\subseteq S(u)^\perp$, take $v=\alpha \mathrm{i}u$ for some $\alpha\in\mathbb{R}$ and define $\omega:=\frac{1+\alpha \mathrm{i}}{1-\alpha \mathrm{i}}$, which necessarily has unit modulus.
Then
\begin{equation*}
u+v
=u+\alpha \mathrm{i} u
=(1+\alpha \mathrm{i})u
=\frac{1+\alpha \mathrm{i}}{1-\alpha \mathrm{i}}(u-\alpha \mathrm{i}u)
=\omega(u-v).
\end{equation*}
Thus, the left-hand side of \eqref{eq.injectivity to orthogonality} is zero, meaning $v\in S(u)^\perp$.

(b) $\Leftrightarrow$ (c):
First, (b) immediately follows from (c).
For the other direction, note that $\mathrm{i}u$ is necessarily orthogonal to every $\varphi_n\varphi_n^* u$:
\begin{equation*}
\langle \varphi_n\varphi_n^* u,\mathrm{i}u\rangle_\mathbb{R}
=\operatorname{Re}\langle \varphi_n\varphi_n^* u,\mathrm{i}u\rangle
=\operatorname{Re}\langle u,\varphi_n\rangle\langle \varphi_n,\mathrm{i}u\rangle
=-\operatorname{Re}\mathrm{i}|\langle u,\varphi_n\rangle|^2
=0.
\end{equation*}
Thus, $\operatorname{span}_\mathbb{R}\{\mathrm{i}u\}\subseteq S(u)^\perp$, and by (b), $\operatorname{dim}S(u)^\perp\leq1$, both of which gives (c).

(c) $\Rightarrow$ (a):
This portion of the proof is inspired by Mukherjee's analysis in~\cite{Mukherjee:81}.
Suppose $\mathcal{A}(x)=\mathcal{A}(y)$.
If $x=y$, we are done.
Otherwise, $x-y\neq0$, and so we may apply (c) to $u=x-y$.
First, note that 
\begin{equation*}
\langle \varphi_n\varphi_n^*(x-y),x+y\rangle_\mathbb{R}
=\operatorname{Re}\langle \varphi_n\varphi_n^*(x-y),x+y\rangle
=\operatorname{Re}(x+y)^*\varphi_n\varphi_n^*(x-y),
\end{equation*}
and so expanding gives
\begin{align*}
\langle \varphi_n\varphi_n^*(x-y),x+y\rangle_\mathbb{R}
&=\operatorname{Re}\Big(|\varphi_n^*x|^2-x^*\varphi_n\varphi_n^*y+y^*\varphi_n\varphi_n^*x-|\varphi_n^*y|^2\Big)\\
&=\operatorname{Re}\Big(-x^*\varphi_n\varphi_n^*y+\overline{x^*\varphi_n\varphi_n^*y}\Big)
=0.
\end{align*}
Since $x+y\in S(x-y)^\perp=\operatorname{span}_\mathbb{R}\{\mathrm{i}(x-y)\}$, there exists $\alpha\in\mathbb{R}$ such that $x+y=\alpha \mathrm{i}(x-y)$, and so rearranging gives $y=\frac{1-\alpha \mathrm{i}}{1+\alpha \mathrm{i}}x$, meaning $y\equiv x\bmod\mathbb{T}$.
\qquad
\end{proof}

To better understand the above result, we will first develop a deeper understanding of the set $\mathbb{C}^M/\mathbb{T}$.
If we remove zero, this set happens to be something called a \emph{smooth manifold}, which means we can cover the set with overlapping patches, each with smooth coordinates, and with smooth coordinate transformations between the overlapping portions.
To see this, consider the patches defined by
\begin{equation*}
U_m:=\{Z\in(\mathbb{C}^M\setminus\{0\})/\mathbb{T}:Z_m\neq0\},
\qquad
m=1,\ldots,M.
\end{equation*}
We define the following coordinates over the patch $U_m$:
\begin{equation*}
(z_1,\ldots,z_M)=\frac{|Z_m|}{Z_m}(Z_1,\ldots,Z_M),
\end{equation*}
where $(Z_1,\ldots,Z_M)\in\mathbb{C}^M\setminus\{0\}$ denotes any representative of the corresponding point in $(\mathbb{C}^M\setminus\{0\})/\mathbb{T}$.
As such, each patch has its own homeomorphism to a set of coordinates, which is an open subset of $\mathbb{R}^{2M-1}$ (we lost a degree of freedom since the $m$th complex coordinate has no imaginary part).
If we denote the $m$th homeomorphism by $f_m\colon U_m\rightarrow\mathbb{R}^{2M-1}$, then it is not difficult to show that each $f_m$, as well as the transition maps $\tau_{m,m'}\colon f_m(U_m\cap U_{m'})\rightarrow f_{m'}(U_m\cap U_{m'})$ defined by $\tau_{m,m'}:=f_{m'}\circ f_m^{-1}$, are all smooth.

So we now understand that $(\mathbb{C}^M\setminus\{0\})/\mathbb{T}$ is a smooth manifold with $2M-1$ real dimensions.
If we consider the function $\mathcal{A}$ over $(\mathbb{C}^M\setminus\{0\})/\mathbb{T}$, we can take its derivative at a given point $u$ in terms of some chosen local coordinates.
This amounts to taking the Jacobian at $u\in U_m$, whose rows are 
$\{2\varphi_n\varphi_n^* u\}_{n=1}^N$ as vectors in $\mathbb{R}^{2M}$, but with the column corresponding to the $m$th imaginary component removed.
With this in mind, and using ideas from the proof of Lemma~22 in~\cite{BandeiraCMN:13}, Theorem~\ref{thm.complex injective} can be reinterpreted as follows:

\begin{theorem}
$\mathcal{A}$ is injective if and only if the derivative of $\mathcal{A}$ is injective at every point in $(\mathbb{C}^M\setminus\{0\})/\mathbb{T}$.
\end{theorem}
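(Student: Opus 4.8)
The plan is to reduce the claim to condition~(c) of Theorem~\ref{thm.complex injective}, which has already been shown equivalent to the injectivity of $\mathcal{A}$. Concretely, I will compute the kernel of the derivative at an arbitrary point and show that this kernel is trivial exactly when $S(u)=\operatorname{span}_\mathbb{R}\{\mathrm{i}u\}^\perp$; quantifying over all points then matches quantifying condition~(c) over all nonzero $u$, and Theorem~\ref{thm.complex injective} closes the loop.

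First I would record the differential of $\mathcal{A}$ on the ambient space $\mathbb{R}^{2M}$ at a representative $u\neq 0$. Differentiating the same expansion used in \eqref{eq.injectivity to orthogonality} gives
\begin{equation*}
\big(D\mathcal{A}_u(v)\big)(n)=\frac{d}{dt}\Big|_{t=0}|\langle u+tv,\varphi_n\rangle|^2=2\operatorname{Re}\langle u,\varphi_n\rangle\langle\varphi_n,v\rangle=2\langle\varphi_n\varphi_n^* u,v\rangle_\mathbb{R}.
\end{equation*}
Thus $D\mathcal{A}_u(v)=0$ for all $n$ precisely when $v\perp\varphi_n\varphi_n^* u$ for every $n$, i.e., $\ker D\mathcal{A}_u=S(u)^\perp$ as a subspace of $\mathbb{R}^{2M}$.

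Next I would pass to the quotient. Because $\mathcal{A}$ is $\mathbb{T}$-invariant, its differential must annihilate the tangent direction to the orbit through $u$, namely $\mathrm{i}u$; indeed the computation in the (b)$\Leftrightarrow$(c) step above shows $\langle\varphi_n\varphi_n^* u,\mathrm{i}u\rangle_\mathbb{R}=0$ for every $n$, so $\mathrm{i}u\in\ker D\mathcal{A}_u=S(u)^\perp$. Identifying the tangent space of $(\mathbb{C}^M\setminus\{0\})/\mathbb{T}$ at $[u]$ with $\mathbb{R}^{2M}/\operatorname{span}_\mathbb{R}\{\mathrm{i}u\}$ (in the chart $U_m$, normalized so that $u_m$ is real, the slice $\{v:\operatorname{Im}v_m=0\}$ is a concrete linear complement to this line, and restricting $D\mathcal{A}_u$ to it recovers the Jacobian described before the statement, whose rows are $2\varphi_n\varphi_n^* u$ with the $m$th imaginary column deleted), the map $D\mathcal{A}_u$ descends to a well-defined derivative on the manifold whose kernel is $S(u)^\perp/\operatorname{span}_\mathbb{R}\{\mathrm{i}u\}$. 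This kernel is trivial exactly when $S(u)^\perp=\operatorname{span}_\mathbb{R}\{\mathrm{i}u\}$, equivalently when $S(u)=\operatorname{span}_\mathbb{R}\{\mathrm{i}u\}^\perp$. Hence the derivative is injective at $[u]$ if and only if condition~(c) of Theorem~\ref{thm.complex injective} holds at $u$, and quantifying over all $[u]$ together with that theorem yields the equivalence with the injectivity of $\mathcal{A}$.

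The main obstacle is not the algebra — the kernel computation is one line — but the manifold bookkeeping: one must verify that $D\mathcal{A}_u$ genuinely descends to the tangent space of the quotient (this is precisely why the orbit direction $\mathrm{i}u$ must lie in the kernel), that \emph{injectivity of the induced map} does not depend on the chart or on the choice of representative $u$, and that this intrinsic derivative agrees with the concrete Jacobian supplied in the discussion preceding the statement. Once the identification of $\ker$ with $S(u)^\perp/\operatorname{span}_\mathbb{R}\{\mathrm{i}u\}$ is secured, the conclusion is immediate from Theorem~\ref{thm.complex injective}.
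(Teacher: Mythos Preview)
Your proposal is correct and follows essentially the approach the paper indicates: the paper does not give a standalone proof here but simply notes that, since the Jacobian in the chart $U_m$ has rows $2\varphi_n\varphi_n^*u$ with the $m$th imaginary column removed, the statement is a reinterpretation of Theorem~\ref{thm.complex injective} (citing Lemma~22 of~\cite{BandeiraCMN:13} for the manifold bookkeeping). Your argument fills in precisely those details---computing $\ker D\mathcal{A}_u=S(u)^\perp$, passing to the quotient by the orbit direction $\mathrm{i}u$, and identifying triviality of the induced kernel with condition~(c)---so there is nothing to correct.
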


This says quite a bit about the intensity measurement mapping $\mathcal{A}$.
Indeed, one can imagine a smooth mapping of a circle to a figure-eight curve, in which the derivative is injective at every point of the circle, but the mapping is certainly not injective.
One could identify injectivity of the derivative as a local form of injectivity, and so it is rather surprising to have this be equivalent to the traditional (global) form of injectivity. 
Unfortunately, it is not clear what one can glean from such a feature of $\mathcal{A}$.
Indeed, the above theorems leave a lot to be desired; compared to the complement property in the real case, it is still unclear what it takes for a complex ensemble to yield injective intensity measurements.
While in pursuit of a more clear understanding, the following bizarre characterization was stumbled upon:
A complex ensemble yields injective intensity measurements precisely when it yields injective \emph{phase-only} measurements (in some sense).
This is made more precise in the following theorem statement:

\begin{theorem}[Theorem~5 in~\cite{BandeiraCMN:13}]
\label{thm.complex phase only}
Consider $\Phi=\{\varphi_n\}_{n=1}^N\subseteq\mathbb{C}^M$ and the mapping $\mathcal{A}\colon\mathbb{C}^M/\mathbb{T}\rightarrow\mathbb{R}^N$ defined by $(\mathcal{A}(x))(n):=|\langle x,\varphi_n\rangle|^2$.
Then $\mathcal{A}$ is injective if and only if the following statement holds: 
If for every $n=1,\ldots,N$, either $\operatorname{arg}(\langle x,\varphi_n\rangle^2)=\operatorname{arg}(\langle y,\varphi_n\rangle^2)$ or one of the sides is not well-defined, then $x=0$, $y=0$, or $y\equiv x\bmod\mathbb{R}\setminus\{0\}$. 
\end{theorem}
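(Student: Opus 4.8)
The plan is to show that the phase-only statement is \emph{literally the same statement} as injectivity of $\mathcal{A}$, viewed through a single change of variables. The guiding insight is that multiplication by $\mathrm{i}$ converts the phase-only hypothesis (``the coordinates $\langle x,\varphi_n\rangle$ and $\langle y,\varphi_n\rangle$ are real multiples of one another,'' i.e.\ their doubled arguments agree) into the intensity-equality obstruction that already governs injectivity via the identity \eqref{eq.injectivity to orthogonality}. Because the resulting correspondence is an exact bijection on pairs of vectors, both directions of the ``if and only if'' fall out at once, with no need to argue contrapositives separately.

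First I would reformulate the phase-only hypothesis algebraically. For each $n$, the clause ``$\operatorname{arg}(\langle x,\varphi_n\rangle^2)=\operatorname{arg}(\langle y,\varphi_n\rangle^2)$, or one side is undefined'' is equivalent to $\langle x,\varphi_n\rangle\overline{\langle y,\varphi_n\rangle}\in\mathbb{R}$, that is, to $\operatorname{Im}\langle x,\varphi_n\rangle\overline{\langle y,\varphi_n\rangle}=0$. Indeed, when one inner product vanishes the product is $0\in\mathbb{R}$, which is precisely the ``undefined'' case; and when both are nonzero, reality of the product is exactly equality of the doubled arguments. So the hypothesis says $\operatorname{Im}\langle x,\varphi_n\rangle\overline{\langle y,\varphi_n\rangle}=0$ for every $n$.

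Next I would connect this to an intensity equation through the substitution $u=x$, $v=\mathrm{i}y$. Using $\langle \mathrm{i}y,\varphi_n\rangle=\mathrm{i}\langle y,\varphi_n\rangle$ and the elementary fact $\operatorname{Re}(-\mathrm{i}w)=\operatorname{Im}(w)$, one gets $\operatorname{Re}\langle x,\varphi_n\rangle\overline{\langle \mathrm{i}y,\varphi_n\rangle}=\operatorname{Im}\langle x,\varphi_n\rangle\overline{\langle y,\varphi_n\rangle}$, so that \eqref{eq.injectivity to orthogonality} reads $\mathcal{A}(x+\mathrm{i}y)(n)-\mathcal{A}(x-\mathrm{i}y)(n)=4\operatorname{Im}\langle x,\varphi_n\rangle\overline{\langle y,\varphi_n\rangle}$. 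Hence the phase-only hypothesis holds for $(x,y)$ if and only if $\mathcal{A}(x+\mathrm{i}y)=\mathcal{A}(x-\mathrm{i}y)$. Since $(x,y)\mapsto(x+\mathrm{i}y,x-\mathrm{i}y)$ is a bijection of $\mathbb{C}^M\times\mathbb{C}^M$, with inverse $(a,b)\mapsto(\tfrac{a+b}{2},\tfrac{a-b}{2\mathrm{i}})$, quantifying over all $(x,y)$ is the same as quantifying over all pairs $a,b$. I would then match the two conclusions by a modulus-one computation: solving $x+\mathrm{i}y=\omega(x-\mathrm{i}y)$ with $|\omega|=1$ forces $y=cx$ with real $c=\tan(\theta/2)$ when $\omega=e^{\mathrm{i}\theta}\neq-1$, and $x=0$ when $\omega=-1$; conversely each of $x=0$, $y=0$, and $y\equiv x\bmod\mathbb{R}\setminus\{0\}$ arises from a suitable unimodular $\omega$. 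Thus ``$x+\mathrm{i}y\equiv x-\mathrm{i}y\bmod\mathbb{T}$'' is exactly ``$x=0$, $y=0$, or $y\equiv x\bmod\mathbb{R}\setminus\{0\}$,'' and so the phase-only statement is precisely the assertion that $\mathcal{A}(x+\mathrm{i}y)=\mathcal{A}(x-\mathrm{i}y)\Rightarrow x+\mathrm{i}y\equiv x-\mathrm{i}y\bmod\mathbb{T}$ for all $(x,y)$, which by the bijection is injectivity of $\mathcal{A}$.

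I expect the main obstacle to be conceptual rather than computational: spotting the correct change of variables $v=\mathrm{i}y$ (equivalently the pairing $a=x+\mathrm{i}y$, $b=x-\mathrm{i}y$) that turns the ``same phase modulo $\pi$'' condition into the intensity-equality condition. Once that pairing is identified, the remaining work --- the reformulation of the argument clause (including the degenerate ``undefined'' cases) and the unimodular bookkeeping that aligns the two conclusions --- is entirely routine.
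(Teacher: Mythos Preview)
Your proof is correct, and it takes a genuinely different route from the paper's. The paper first invokes Theorem~\ref{thm.complex injective} (the $S(u)$ characterization), takes orthogonal complements to rewrite injectivity as ``for all nonzero $x$, if $\operatorname{Re}\langle\varphi_n\varphi_n^*x,\mathrm{i}y\rangle=0$ for every $n$ then $y=0$ or $y\equiv x\bmod\mathbb{R}\setminus\{0\}$,'' and only then identifies $\operatorname{Re}\langle\varphi_n\varphi_n^*x,\mathrm{i}y\rangle$ with $\operatorname{Im}\langle x,\varphi_n\rangle\overline{\langle y,\varphi_n\rangle}$ and matches this to the doubled-argument condition. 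Thus the paper's argument passes through the full proof of Theorem~\ref{thm.complex injective}. You bypass that theorem entirely: you use only the polarization identity~\eqref{eq.injectivity to orthogonality} together with the bijection $(x,y)\mapsto(x+\mathrm{i}y,\,x-\mathrm{i}y)$ to exhibit the phase-only statement as a literal relabeling of the definition of injectivity. This is more elementary and self-contained; in particular it shows that Theorem~\ref{thm.complex phase only} does not actually depend on Theorem~\ref{thm.complex injective}. The paper's route, on the other hand, makes transparent the link to the real-span description $S(u)$ and hence to the differential-geometric interpretation discussed afterward. Both proofs share the same algebraic core (the equivalence of the argument clause with $\operatorname{Im}\langle x,\varphi_n\rangle\overline{\langle y,\varphi_n\rangle}=0$), but yours reaches the conclusion with fewer prerequisites.
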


\begin{proof}
By Theorem~\ref{thm.complex injective}, $\mathcal{A}$ is injective if and only if
\begin{equation}
\label{eq.injective equivalence 1}
\forall x\in\mathbb{C}^M\setminus\{0\}, \qquad\operatorname{span}_\mathbb{R}\{\varphi_n\varphi_n^*x\}_{n=1}^N=\operatorname{span}_\mathbb{R}\{\mathrm{i}x\}^{\perp}.
\end{equation}
Taking orthogonal complements of both sides, note that regardless of $x\in\mathbb{C}^M\setminus\{0\}$, we know $\operatorname{span}_\mathbb{R}\{\mathrm{i}x\}$ is necessarily a subset of $(\operatorname{span}_\mathbb{R}\{\varphi_n\varphi_n^*x\}_{n=1}^N)^\perp$, and so \eqref{eq.injective equivalence 1} is equivalent to
\begin{align*}
\forall x\in\mathbb{C}^M\setminus\{0\},\qquad&\operatorname{Re}\langle\varphi_n\varphi_n^*x,\mathrm{i}y\rangle=0\quad\forall n=1,\ldots,N\\
&\qquad\Longrightarrow\quad y=0 \text{ or } y\equiv x\bmod\mathbb{R}\setminus\{0\}.
\end{align*}
Thus, we need to determine when $\operatorname{Im}\langle x,\varphi_n\rangle\overline{\langle y,\varphi_n\rangle}=\operatorname{Re}\langle\varphi_n\varphi_n^*x,\mathrm{i}y\rangle=0$.
We claim that this is true if and only if $\operatorname{arg}(\langle x,\varphi_n\rangle^2)=\operatorname{arg}(\langle y,\varphi_n\rangle^2)$ or one of the sides is not well-defined.
To see this, we substitute $a:=\langle x,\varphi_n\rangle$ and $b:=\langle y,\varphi_n\rangle$.
Then to complete the proof, it suffices to show that $\operatorname{Im}a\overline{b}=0$ if and only if $\operatorname{arg}(a^2)=\operatorname{arg}(b^2)$, $a=0$, or $b=0$.

($\Leftarrow$)
If either $a$ or $b$ is zero, the result is immediate.
Otherwise, if $2\operatorname{arg}(a)=\operatorname{arg}(a^2)=\operatorname{arg}(b^2)=2\operatorname{arg}(b)$, then $2\pi$ divides $2(\operatorname{arg}(a)-\operatorname{arg}(b))$, and so $\operatorname{arg}(a\overline{b})=\operatorname{arg}(a)-\operatorname{arg}(b)$ is a multiple of $\pi$.
This implies that $a\overline{b}\in\mathbb{R}$, and so $\operatorname{Im}a\overline{b}=0$.

($\Rightarrow$)
Suppose $\operatorname{Im}a\overline{b}=0$.
Taking the polar decompositions $a=re^{\mathrm{i}\theta}$ and $b=se^{\mathrm{i}\phi}$, we equivalently have that $rs\sin{(\theta-\phi)}=0$.
Certainly, this can occur whenever $r$ or $s$ is zero, i.e., $a=0$ or $b=0$.
Otherwise, a difference formula then gives $\sin{\theta}\cos{\phi}=\cos{\theta}\sin{\phi}$.
From this, we know that if $\theta$ is an integer multiple of $\pi/2$, then $\phi$ is as well, and vice versa, in which case $\operatorname{arg}(a^2)=2\operatorname{arg}(a)=\pi=2\operatorname{arg}(b)=\operatorname{arg}(b^2)$.
Else, we can divide both sides by $\cos{\theta}\cos{\phi}$ to obtain $\tan{\theta}=\tan{\phi}$, from which it is evident that $\theta\equiv\phi\bmod\pi$, and so $\operatorname{arg}(a^2)=2\operatorname{arg}(a)=2\operatorname{arg}(b)=\operatorname{arg}(b^2)$.
\qquad
\end{proof}

The notion of injective phase-only measurements appears similar to the notion of parallel rigidity in certain location estimation problems (for example, see~\cite{OzyesilSB:13} and references therein).
It would be interesting to further investigate this relationship, but at the very least, it is rather striking that injectivity in one setting is equivalent to injectivity in the other.
In~\cite{BandeiraCMN:13}, this equivalence is used to prove that the complement property is necessary for injectivity in the complex case.
Contrary to what is claimed in~\cite{BalanCE:06}, the first part of the proof of Theorem~\ref{thm.complement property characterization} does not suffice: 
It demonstrates that $u+v\neq\pm(u-v)$, but fails to establish that $u+v\not\equiv u-v\bmod\mathbb{T}$; for instance, it could very well be the case that $u+v=\mathrm{i}(u-v)$, and so injectivity would not be violated in the complex case.
Overall, the complement property is necessary but not sufficient for injectivity.  
To see that it is not sufficient, consider measurement vectors $(1,0)$, $(0,1)$ and $(1,1)$.
These certainly satisfy the complement property, but $\mathcal{A}((1,\mathrm{i}))=(1,1,2)=\mathcal{A}((1,-\mathrm{i}))$, despite the fact that $(1,\mathrm{i})\not\equiv(1,-\mathrm{i})\bmod\mathbb{T}$; in general, real measurement vectors fail to yield injective intensity measurements in the complex setting since they do not distinguish complex conjugates.

The theorem that follows provides one last characterization of injectivity in the complex case, and it will play a key role in our understanding of the phase transition.
Before stating the result, define the real $M^2$-dimensional space $\mathbb{H}^{M\times M}$ of self-adjoint $M\times M$ matrices; note that this is not a vector space over complex scalars since the diagonal of a self-adjoint matrix must be real.
Given an ensemble of measurement vectors $\{\varphi_n\}_{n=1}^N\subseteq\mathbb{C}^M$, 
define the \emph{super analysis operator} $\mathbf{A}\colon\mathbb{H}^{M\times M}\rightarrow\mathbb{R}^N$ by $(\mathbf{A}H)(n)=\langle H,\varphi_n\varphi_n^*\rangle_\mathrm{HS}$; here, $\langle\cdot,\cdot\rangle_\mathrm{HS}$ denotes the Hilbert-Schmidt inner product, which induces the Frobenius matrix norm.
Note that $\mathbf{A}$ is a linear operator, and yet 
\begin{align*}
(\mathbf{A}xx^*)(n)
&=\langle xx^*,\varphi_n\varphi_n^*\rangle_\mathrm{HS}
=\operatorname{Tr}[\varphi_n\varphi_n^*xx^*]\\
&=\operatorname{Tr}[\varphi_n^*xx^*\varphi_n]
=\varphi_n^*xx^*\varphi_n
=|\langle x,\varphi_n\rangle|^2
=(\mathcal{A}(x))(n).
\end{align*}
In words, the class of vectors identified with $x$ modulo $\mathbb{T}$ can be ``lifted'' to $xx^*$, thereby linearizing the intensity measurement process at the price of squaring the dimension of the vector space of interest; this identification has been exploited by some of the most noteworthy strides in modern phase retrieval~\cite{BalanBCE:09,CandesSV:13}.
As the following lemma shows, this identification can also be used to characterize injectivity:

\begin{theorem}[Lemma~9 in~\cite{BandeiraCMN:13}, cf.\ Corollary~1 in~\cite{HeinosaariMW:13}]
\label{lem.not injective rank 2}
$\mathcal{A}$ is not injective if and only if there exists a matrix of rank $1$ or $2$ in the null space of $\mathbf{A}$.
\end{theorem}

\begin{proof}
($\Rightarrow$)
If $\mathcal{A}$ is not injective, then there exist $x,y\in\mathbb{C}^M/\mathbb{T}$ with $x\not\equiv y\bmod\mathbb{T}$ such that $\mathcal{A}(x)=\mathcal{A}(y)$.
That is, $\mathbf{A}xx^*=\mathbf{A}yy^*$, and so $xx^*-yy^*$ is in the null space of $\mathbf{A}$.

($\Leftarrow$)
First, suppose there is a rank-$1$ matrix $H$ in the null space of $\mathbf{A}$.
Then there exists $x\in\mathbb{C}^M$ such that $H=xx^*$ and $(\mathcal{A}(x))(n)=(\mathbf{A}xx^*)(n)=0=(\mathcal{A}(0))(n)$.
But $x\not\equiv0\bmod\mathbb{T}$, and so $\mathcal{A}$ is not injective.
Now suppose there is a rank-$2$ matrix $H$ in the null space of $\mathbf{A}$.
Then by the spectral theorem, there are orthonormal $u_1,u_2\in\mathbb{C}^M$ and nonzero $\lambda_1\geq\lambda_2$ such that $H=\lambda_1u_1u_1^*+\lambda_2u_2u_2^*$.
Since $H$ is in the null space of $\mathbf{A}$, the following holds for every $n$:
\begin{align}
\nonumber
0
&=\langle H,\varphi_n\varphi_n^*\rangle_\mathrm{HS}\\
\label{eq.rank 2 injective}
&=\langle \lambda_1u_1u_1^*+\lambda_2u_2u_2^*,\varphi_n\varphi_n^*\rangle_\mathrm{HS}
=\lambda_1|\langle u_1,\varphi_n\rangle|^2+\lambda_2|\langle u_2,\varphi_n\rangle|^2.
\end{align}
Taking $x:=|\lambda_1|^{1/2}u_1$ and $y:=|\lambda_2|^{1/2}u_2$, note that $y\not\equiv x\bmod\mathbb{T}$ since they are nonzero and orthogonal.
We claim that $\mathcal{A}(x)=\mathcal{A}(y)$, which would complete the proof.
If $\lambda_1$ and $\lambda_2$ have the same sign, then by \eqref{eq.rank 2 injective}, $|\langle x,\varphi_n\rangle|^2+|\langle y,\varphi_n\rangle|^2=0$ for every $n$, meaning $|\langle x,\varphi_n\rangle|^2=0=|\langle y,\varphi_n\rangle|^2$.
Otherwise, $\lambda_1>0>\lambda_2$, and so $xx^*-yy^*=\lambda_1u_1u_1^*+\lambda_2u_2u_2^*=A$ is in the null space of $\mathbf{A}$, meaning $\mathcal{A}(x)=\mathbf{A}xx^*=\mathbf{A}yy^*=\mathcal{A}(y)$.
\qquad
\end{proof}

Of the three characterizations of injectivity in the complex case that we provided, this is by far the easiest to grasp, and perhaps due to its simplicity, much of our current understanding of the phase transition is based on this one.
Still, comparing to our understanding in the real case helps to identify areas for improvement.
For example, it is unclear how to test whether matrices of rank 1 or 2 lie in the null space of an arbitrary super analysis operator.
Indeed, we have yet to find a ``good'' sufficient condition for injectivity in the complex case like the complement property or full spark provide in the real case.

\subsubsection{The Phase Transition for Injectivity in the Complex Case}

At this point, we wish to study the phase transition (presuming it exists) for $\mathrm{Inj}[\Phi;\mathbb{C}^{M\times N}]$.
To this end, we introduce the following subproblem (i.e., part (a) of the phase transition):

\begin{problem}
\label{prob.n star}
For any dimension $M$, what is the smallest number $N^*(M)$ of injective intensity measurements?
\end{problem}

Interestingly, this problem has some history in the quantum mechanics literature.
For example, \cite{Vogt:78} presents \emph{Wright's conjecture} that three observables suffice to uniquely determine any pure state.
In phase retrieval parlance, the conjecture states that there exist unitary matrices $U_1$, $U_2$ and $U_3$ such that $\Phi=[U_1~U_2~U_3]$ yields injective intensity measurements.
Note that Wright's conjecture actually implies that $N^*(M)\leq 3M-2$; indeed, $U_1$ determines the norm (squared) of the signal, rendering the last column of both $U_2$ and $U_3$ unnecessary.
Finkelstein~\cite{Finkelstein:04} later proved that $N^*(M)\geq 3M-2$; combined with Wright's conjecture, this led many to believe that $N^*(M)=3M-2$ (for example, see~\cite{CandesESV:13}).
However, both this and Wright's conjecture were recently disproved in~\cite{HeinosaariMW:13}, in which Heinosaari, Mazzarella and Wolf invoked embedding theorems from differential geometry to prove that 
\begin{equation}
\label{eq.HMW lower bound}
N^*(M)\geq
\left\{
\begin{array}{ll}
4M-2\alpha(M-1)-3 &\mbox{for all } M\\
4M-2\alpha(M-1)-2 &\mbox{if } M\mbox{ is odd, }\alpha(M-1)\equiv 2\bmod 4\\
4M-2\alpha(M-1)-1 &\mbox{if } M\mbox{ is odd, }\alpha(M-1)\equiv 3\bmod 4
\end{array}
\right.
\end{equation}
where $\alpha(M-1)\leq\log_2(M)$ is the number of $1$'s in the binary representation of $M-1$; apparently, this result had previously appeared in~\cite{Moroz:84,MorozP:94} as well.
By comparison, Balan, Casazza and Edidin~\cite{BalanCE:06} proved that $N^*(M)\leq4M-2$, and so we at least have the asymptotic expression $N^*(M)=(4+o(1))M$.

At this point, we should clarify some intuition for $N^*(M)$ by explaining the nature of these best known lower and upper bounds.
First, the lower bound~\eqref{eq.HMW lower bound} follows from an older result that complex projective space $\mathbb{C}\mathbf{P}^n$ does not smoothly embed into $\mathbb{R}^{4n-2\alpha(n)}$ (and other slight refinements which depend on $n$); this is due to Mayer~\cite{Mayer:65}, but we highly recommend James's survey on the topic~\cite{James:71}.
To prove~\eqref{eq.HMW lower bound} from this, suppose $\mathcal{A}\colon\mathbb{C}^M/\mathbb{T}\rightarrow\mathbb{R}^N$ were injective.
Then $\mathcal{E}$ defined by $\mathcal{E}(x):=\mathcal{A}(x)/\|x\|^2$ embeds $\mathbb{C}\mathbf{P}^{M-1}$ into $\mathbb{R}^N$, and as Heinosaari et al.\ show, the embedding is necessarily smooth; considering $\mathcal{A}(x)$ is made up of rather simple polynomials, the fact that $\mathcal{E}$ is smooth should not come as a surprise.
As such, the nonembedding result produces the best known lower bound.
To evaluate this bound, first note that Milgram~\cite{Milgram:67} constructs an embedding of $\mathbb{C}\mathbf{P}^n$ into $\mathbb{R}^{4n-\alpha(n)+1}$, establishing the importance of the $\alpha(n)$ term, but the constructed embedding does not correspond to an intensity measurement process.
In order to relate these embedding results to our problem, consider the real case: 
It is known that for odd $n\geq7$, real projective space $\mathbb{R}\mathbf{P}^n$ smoothly embeds into $\mathbb{R}^{2n-\alpha(n)+1}$~\cite{Steer:70}, which means the analogous lower bound for the real case would necessarily be smaller than $2(M-1)-\alpha(M-1)+1=2M-\alpha(M-1)-1<2M-1$.
This indicates that the $\alpha(M-1)$ term in~\eqref{eq.HMW lower bound} might be an artifact of the proof technique, rather than of $N^*(M)$.

We now consider our previous analysis of injectivity to help guide intuition about the possible phase transition.
Theorem~\ref{lem.not injective rank 2} indicates that we want the null space of $\mathbf{A}$ to avoid nonzero matrices of rank $\leq 2$.
Intuitively, this is easier when the ``dimension'' of this set of matrices is small.
To get some idea of this dimension, let's count real degrees of freedom.
By the spectral theorem, almost every matrix in $\mathbb{H}^{M\times M}$ of rank $\leq 2$ can be uniquely expressed as $\lambda_1u_1u_1^*+\lambda_2u_2u_2^*$ with $\lambda_1\leq\lambda_2$.
Here, $(\lambda_1,\lambda_2)$ has two degrees of freedom.
Next, $u_1$ can be any vector in $\mathbb{C}^M$, except its norm must be $1$.
Also, since $u_1$ is only unique up to global phase, we take its first entry to be nonnegative without loss of generality.
Given the norm and phase constraints, $u_1$ has a total of $2M-2$ real degrees of freedom.
Finally, $u_2$ has the same norm and phase constraints, but it must also be orthogonal to $u_1$, that is, $\operatorname{Re}\langle u_2,u_1\rangle=\operatorname{Im}\langle u_2,u_1\rangle=0$.
As such, $u_2$ has $2M-4$ real degrees of freedom.
All together, we can expect the set of matrices in question to have $2+(2M-2)+(2M-4)=4M-4$ real dimensions.

If the set $S$ of matrices of rank $\leq 2$ formed a subspace of $\mathbb{H}^{M\times M}$ (it doesn't), then we could expect the null space of $\mathbf{A}$ to intersect that subspace nontrivially whenever $\dim\operatorname{null}(\mathbf{A})+(4M-4)>\dim(\mathbb{H}^{M\times M})=M^2$.
By the rank-nullity theorem, this would indicate that injectivity requires 
\begin{equation}
\label{eq.4M-4 reasoning}
N
\geq\operatorname{rank}(\mathbf{A})
=M^2-\dim\operatorname{null}(\mathbf{A})
\geq 4M-4.
\end{equation}
Of course, this logic is not technically valid since $S$ is not a subspace.
It is, however, a special kind of set: a real projective variety.
To see this, let's first show that it is a real algebraic variety, specifically, the set of members of $\mathbb{H}^{M\times M}$ for which all $3\times 3$ minors are zero.
Of course, every member of $S$ has this minor property.
Next, we show that members of $S$ are the only matrices with this property: 
If the rank of a given matrix is $\geq3$, then it has an $M\times 3$ submatrix of linearly independent columns, and since the rank of its transpose is also $\geq3$, this $M\times 3$ submatrix must have $3$ linearly independent rows, thereby implicating a full-rank $3\times 3$ submatrix.
This variety is said to be projective because it is closed under scalar multiplication. 
If $S$ were a projective variety over an algebraically closed field (it's not), then the projective dimension theorem (Theorem 7.2 of~\cite{Hartshorne:77}) says that $S$ intersects $\operatorname{null}(\mathbf{A})$ nontrivially whenever the dimensions are large enough: $\dim\operatorname{null}(\mathbf{A})+\dim S>\dim\mathbb{H}^{M\times M}$, thereby implying that injectivity requires \eqref{eq.4M-4 reasoning}.
Unfortunately, this theorem is not valid when the field is $\mathbb{R}$;
for example, the cone defined by $x^2+y^2-z^2=0$ in $\mathbb{R}^3$ is a projective variety of dimension $2$, but its intersection with the $2$-dimensional $xy$-plane is trivial, despite the fact that $2+2>3$.

In the absence of a proof, we pose the natural conjecture:

\begin{conjecture}[The $4M-4$ conjecture]
$\mathrm{Inj}[\Phi;\mathbb{C}^{M\times N}]$ exhibits a phase transition at $N=4M-4$.
\end{conjecture}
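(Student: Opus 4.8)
The plan is to attack parts (a) and (b) of the phase transition separately, in both cases reducing injectivity to the geometry of the real projective variety
\[
R:=\{H\in\mathbb{H}^{M\times M}:\operatorname{rank}H\le 2\}
\]
and its interaction with $\operatorname{null}(\mathbf{A})$. By Theorem~\ref{lem.not injective rank 2}, a nonzero matrix of rank $1$ or $2$ is precisely a nonzero point of $R$, so $\mathcal{A}$ is injective if and only if $\operatorname{null}(\mathbf{A})\cap R=\{0\}$. I will use throughout the facts established in the preceding discussion: $R$ is cut out by the vanishing of all $3\times3$ minors, $\dim R=4M-4$, and $\dim\operatorname{null}(\mathbf{A})=M^2-\operatorname{rank}(\mathbf{A})\ge M^2-N$.

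For part (b) (genericity) I would follow the algebraic-geometry template of Section~1: exhibit a proper real variety outside of which $\mathrm{Inj}$ holds. First I would form the incidence set
\[
\mathcal{I}:=\{(\Phi,H)\in\mathbb{C}^{M\times N}\times(R\setminus\{0\}):\mathbf{A}_\Phi H=0\},
\]
projectivize in the $H$-coordinate so each fiber over $\Phi$ is compact, and project to $\Phi$-space; the image is exactly the set of non-injective $\Phi$. To show this image is nowhere dense, the cleanest route is to complexify: replace $R$ and $\mathbf{A}$ by their natural complex analogues, run the projective elimination/dimension count over $\mathbb{C}$ (where it is valid) to see that the generic complex fiber is empty once $N\ge 4M-4$, obtain a defining polynomial from the elimination ideal, and descend to $\mathbb{R}$ using that a single injective $\Phi_0\in\mathbb{C}^{M\times(4M-4)}$ certifies that this polynomial is not identically zero, so its nonvanishing locus is open and dense. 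I expect this part to be technical but tractable, since over an algebraically closed field the obstructions are purely dimension-theoretic; the one genuinely delicate ingredient is producing the minimal example $\Phi_0$ at exactly $N=4M-4$.

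Part (a) (necessity of $N\ge 4M-4$) is the crux. Here I must show that for \emph{every} $\Phi$ with $N<4M-4$ one has $\operatorname{null}(\mathbf{A})\cap R\ne\{0\}$. The dimension count is on our side, since $\dim\operatorname{null}(\mathbf{A})+\dim R\ge(M^2-N)+(4M-4)>M^2$ exactly when $N<4M-4$, so over $\mathbb{C}$ the projective dimension theorem would immediately force a nontrivial intersection. The difficulty is that this is a statement over $\mathbb{R}$, where—as the cone $x^2+y^2-z^2=0$ shows—a matching dimension count does not force a real intersection. The plan is to seek a real-algebraic refinement that exploits the specific structure of the two objects rather than their dimensions alone: $R$ is the rank variety of Hermitian matrices, with its stratification by rank and signature, while $\operatorname{null}(\mathbf{A})$ is not an arbitrary subspace but the orthogonal complement of $\operatorname{span}_{\mathbb{R}}\{\varphi_n\varphi_n^*\}_{n=1}^N$, a span of positive semidefinite rank-one matrices. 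Concretely I would attempt a mod-$2$ intersection or linking argument in $\mathbb{R}\mathbf{P}^{M^2-1}$, computing the relevant $\mathbb{Z}/2$ characteristic classes of a resolution of $R$ and showing the intersection class is nonzero, then sharpening the count to eliminate the topological correction term.

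The main obstacle is exactly the removal of the $\alpha(M-1)$ term in the Heinosaari--Mazzarella--Wolf bound. Any argument that sees only the topology of $\mathbb{C}\mathbf{P}^{M-1}$ is provably too weak: Milgram's construction embeds $\mathbb{C}\mathbf{P}^{M-1}$ smoothly into $\mathbb{R}^{4M-3-\alpha(M-1)}$, hence into $\mathbb{R}^{4M-5}$ whenever $\alpha(M-1)\ge 2$, so a purely topological nonembedding obstruction cannot reach $4M-4$. Success therefore hinges on using that the embedding $\mathcal{E}(x)=\mathcal{A}(x)/\|x\|^2$ is not an arbitrary smooth embedding but a rigid algebraic one built from the quadratics $|\langle x,\varphi_n\rangle|^2$, and on upgrading the heuristic real dimension count into a genuine intersection theorem restricted to subspaces of the special form $\operatorname{span}_{\mathbb{R}}\{\varphi_n\varphi_n^*\}^{\perp}$. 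This is the step the author flags as unresolved, and I would expect it to demand ideas beyond both the differential-topological bound and the complex dimension theorem; it is even conceivable that the naive count simply fails over $\mathbb{R}$ for certain $M$, in which case the conjecture itself would require revision.
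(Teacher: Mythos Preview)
The statement you are asked to prove is a \emph{conjecture}, and the paper explicitly offers it without proof (``In the absence of a proof, we pose the natural conjecture''); indeed the author advertises a cash prize for a resolution. So there is no ``paper's own proof'' to compare against. What the paper does contain is a survey of partial progress: part~(b) of the phase transition was established in full by Conca, Edidin, Hering and Vinzant via exactly the incidence-variety and complex dimension-count argument you outline (lift to pairs $(\Phi,Q)$, count dimensions, project), together with a certifying example; and part~(a) is known only for $M=2,3$ and, by the CEHV integrality argument, for $M=2^k+1$. Your sketch for~(b) therefore matches the successful approach already recorded in the paper, and the ``genuinely delicate ingredient'' you flag---a minimal injective $\Phi_0$ at $N=4M-4$---is supplied by the Bodmann--Hammen and Fickus--Mixon--Nelson--Wang constructions discussed immediately afterward.

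Your treatment of part~(a) is an honest research plan rather than a proof, and you say so yourself. The diagnosis is accurate: the obstruction is precisely that the projective dimension theorem fails over~$\mathbb{R}$, and any argument that sees only the smooth topology of $\mathbb{C}\mathbf{P}^{M-1}$ cannot beat the Heinosaari--Mazzarella--Wolf bound with its $\alpha(M-1)$ defect. Your proposed route---a mod-$2$ intersection or characteristic-class computation on a resolution of the rank-$\le 2$ Hermitian variety, combined with the special semidefinite structure of $\operatorname{span}_{\mathbb{R}}\{\varphi_n\varphi_n^*\}$---is a reasonable line of attack, but it is speculative, and you correctly leave open the possibility that the conjecture itself is false. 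That is exactly the state of affairs the paper reports: part~(a) beyond the listed special cases is genuinely open, and nothing in your proposal closes that gap.
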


As incremental progress toward solving the $4M-4$ conjecture, we offer the following result:

\begin{theorem}[Theorem~10 in~\cite{BandeiraCMN:13}]
\label{thm.4M4 true 2}
The $4M-4$ conjecture is true when $M=2$.
\end{theorem}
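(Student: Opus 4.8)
The plan is to exploit the fact that when $M=2$, the rank condition of Theorem~\ref{lem.not injective rank 2} trivializes. Indeed, every nonzero matrix in $\mathbb{H}^{2\times2}$ has rank $1$ or $2$, so Theorem~\ref{lem.not injective rank 2} says that $\mathcal{A}$ is \emph{not} injective precisely when the null space of $\mathbf{A}$ contains \emph{any} nonzero matrix. Equivalently, $\mathcal{A}$ is injective if and only if $\mathbf{A}$ has trivial null space, and since $\operatorname{null}(\mathbf{A})=(\operatorname{span}_\mathbb{R}\{\varphi_n\varphi_n^*\}_{n=1}^N)^\perp$, this happens if and only if $\{\varphi_n\varphi_n^*\}_{n=1}^N$ spans the four-dimensional space $\mathbb{H}^{2\times2}$. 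This is the central reduction: for $M=2$, injectivity is equivalent to a pure spanning condition, and the threshold $4M-4=4$ is precisely $\dim_\mathbb{R}\mathbb{H}^{2\times2}=M^2$.

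With this reduction in hand, part (a) of the phase transition is immediate: whenever $N<4$, the collection $\{\varphi_n\varphi_n^*\}_{n=1}^N$ consists of fewer than four matrices and therefore cannot span the four-dimensional space $\mathbb{H}^{2\times2}$, so $\mathcal{A}$ fails to be injective.

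For part (b), I would first exhibit a single ensemble of four vectors whose outer products span $\mathbb{H}^{2\times2}$---for instance $\varphi_1=(1,0)$, $\varphi_2=(0,1)$, $\varphi_3=(1,1)$, $\varphi_4=(1,\mathrm{i})$, whose outer products are readily checked to span (the differences $\varphi_3\varphi_3^*-\varphi_1\varphi_1^*-\varphi_2\varphi_2^*$ and $\varphi_4\varphi_4^*-\varphi_1\varphi_1^*-\varphi_2\varphi_2^*$ recover the two off-diagonal basis elements, while $\varphi_1\varphi_1^*$ and $\varphi_2\varphi_2^*$ supply the diagonal ones). Now fix a basis of $\mathbb{H}^{2\times2}$ and, for each $\Phi\in\mathbb{C}^{2\times N}$, let $B(\Phi)$ be the $4\times N$ real matrix whose $n$th column holds the coordinates of $\varphi_n\varphi_n^*$. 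The spanning condition is $\operatorname{rank}B(\Phi)=4$, equivalently the non-vanishing of some $4\times4$ minor of $B(\Phi)$. Each such minor is a polynomial in the real and imaginary parts of the entries of $\Phi$, so the set of $\Phi$ for which spanning \emph{fails} is the common zero set of these finitely many polynomials---a real algebraic variety $W$. The explicit example above (padded with arbitrary extra columns when $N>4$) shows that at least one of these minors is not the zero polynomial, so $W$ is a proper variety, and its complement is therefore the desired open, dense subset on which $\mathcal{A}$ is injective.

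The main obstacle is really just the initial observation, namely recognizing that the rank-$\leq2$ hypothesis of Theorem~\ref{lem.not injective rank 2} is vacuous when $M=2$ and thereby collapsing injectivity into a linear spanning problem; once that is seen, both halves of the phase transition follow from standard spanning and algebraic-variety arguments. It is worth noting that this collapse is exactly the phenomenon that fails for $M>2$: there, $4M-4<M^2$, the rank-$\leq2$ matrices form a \emph{proper} subvariety of $\mathbb{H}^{M\times M}$, and the (real) projective dimension theorem is unavailable, which is precisely why the $4M-4$ conjecture remains open in general.
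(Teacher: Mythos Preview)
Your proposal is correct and follows essentially the same approach as the paper: both reduce injectivity for $M=2$ to the triviality of $\operatorname{null}(\mathbf{A})$ via Theorem~\ref{lem.not injective rank 2} (since every nonzero $2\times2$ self-adjoint matrix has rank $1$ or $2$), deduce part~(a) from $\dim_\mathbb{R}\mathbb{H}^{2\times2}=4$, and handle part~(b) by exhibiting the failure set as a proper real algebraic variety. Your treatment is in fact slightly more complete than the paper's sketch---you handle all $N\geq4$ via $4\times4$ minors rather than just the square case $N=4$, and you supply an explicit spanning example---but the key idea is identical.
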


In this case, injectivity is equivalent to $\mathbf{A}$ having a trivial null space by Theorem~\ref{lem.not injective rank 2}, meaning $\mathbf{A}$ must have rank $M^2=4=4M-4$ for injectivity, implying part (a).
For $N=4M-4$, $\mathbf{A}$ has a square matrix representation, and so injectivity is equivalent to having $\det\mathbf{A}\neq0$.
As such, part (b) is proved by considering the real algebraic variety $V:=\{\mathbf{A}:\mathrm{Re}\det\mathbf{A}=\mathrm{Im}\det\mathbf{A}=0\}$ and showing that $V^\mathrm{c}$ is nonempty.

We can also prove the $M=3$ case, but we first introduce Algorithm~\ref{algorithm}, namely the \emph{HMW test} for injectivity; this is named after Heinosaari, Mazarella and Wolf, who implicitly introduce this algorithm in their paper~\cite{HeinosaariMW:13}.

\begin{algorithm}[h]
\label{algorithm}
\caption{The HMW test for injectivity when $M=3$}
\textbf{Input:} Measurement vectors $\{\varphi_n\}_{n=1}^N\subseteq\mathbb{C}^3$\\
\textbf{Output:} Whether $\mathcal{A}$ is injective
\begin{algorithmic}
\STATE Define $\mathbf{A}\colon\mathbb{H}^{3\times 3}\rightarrow\mathbb{R}^N$ such that $\mathbf{A}H=\{\langle H,\varphi_n\varphi_n^*\rangle_\textrm{HS}\}_{n=1}^N$
\IF{$\operatorname{dim}\operatorname{null}(\mathbf{A})=0$}
\STATE ``INJECTIVE'' \hfill \tiny{\COMMENT{if $\mathbf{A}$ is injective, then $\mathcal{A}$ is injective}}
\ELSE
\STATE Pick $H\in\operatorname{null}(\mathbf{A})$, $H\neq0$
\IF{$\operatorname{dim}\operatorname{null}(\mathbf{A})=1$ and $\det(H)\neq0$}
\STATE ``INJECTIVE'' \hfill \tiny{\COMMENT{if $\mathbf{A}$ only maps nonsingular matrices to zero, then $\mathcal{A}$ is injective}}
\ELSE
\STATE ``NOT INJECTIVE'' \hfill \tiny{\COMMENT{in the remaining case, $\mathbf{A}$ maps differences of rank-$1$ matrices to zero}}
\ENDIF
\ENDIF
\end{algorithmic}
\end{algorithm}

\begin{theorem}[Theorem~11 in~\cite{BandeiraCMN:13}, cf.\ Proposition~6 in~\cite{HeinosaariMW:13}]
When $M=3$, the HMW test correctly determines whether $\mathcal{A}$ is injective.
\end{theorem}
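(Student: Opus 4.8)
The plan is to verify that each branch of the HMW test agrees with the injectivity criterion of Theorem~\ref{lem.not injective rank 2}. Since $M=3$, a nonzero matrix $H\in\mathbb{H}^{3\times 3}$ has rank $1$ or $2$ precisely when it is singular, i.e.\ $\det(H)=0$. Thus Theorem~\ref{lem.not injective rank 2} says that $\mathcal{A}$ is injective if and only if $\operatorname{null}(\mathbf{A})$ contains no nonzero singular matrix, and the whole proof reduces to checking that the algorithm outputs ``INJECTIVE'' exactly when this condition holds.

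The branches where $\operatorname{dim}\operatorname{null}(\mathbf{A})\leq 1$ are immediate. If $\operatorname{dim}\operatorname{null}(\mathbf{A})=0$, then the null space is trivial and contains no nonzero matrix of any rank, so $\mathcal{A}$ is injective. If $\operatorname{dim}\operatorname{null}(\mathbf{A})=1$, then $\operatorname{null}(\mathbf{A})=\{cH:c\in\mathbb{R}\}$ for the chosen nonzero $H$, and every nonzero element $cH$ has the same rank as $H$. Hence when $\det(H)\neq0$ the null space consists only of the zero matrix and full-rank matrices, so $\mathcal{A}$ is injective; whereas when $\det(H)=0$ the matrix $H$ itself is a nonzero singular element of the null space, so $\mathcal{A}$ is not injective. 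These outcomes match the algorithm's declarations in exactly these subcases.

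The main work is the case $\operatorname{dim}\operatorname{null}(\mathbf{A})\geq 2$, where the algorithm declares ``NOT INJECTIVE''; here I must produce a nonzero singular matrix in the null space. I would fix any two linearly independent $H_1,H_2\in\operatorname{null}(\mathbf{A})$ and study the binary form $p(a,b):=\det(aH_1+bH_2)$. Because each $aH_1+bH_2$ is Hermitian, $p$ takes real values, and because the determinant of a $3\times 3$ matrix is cubic in its entries while those entries are linear in $(a,b)$, the function $p$ is a real homogeneous polynomial of degree $3$ in $(a,b)$. The key point is that an odd-degree binary form over $\mathbb{R}$ always has a nontrivial real zero: if $\det(H_1)=0$ we may take $(a,b)=(1,0)$, and otherwise the cubic $t\mapsto p(t,1)=\det(tH_1+H_2)$ has a real root $t_0$ by the intermediate value theorem, giving the zero $(t_0,1)$. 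Either way we obtain $(a,b)\neq(0,0)$ with $\det(aH_1+bH_2)=0$; the matrix $aH_1+bH_2$ is nonzero by the linear independence of $H_1,H_2$, hence it is a nonzero singular element of $\operatorname{null}(\mathbf{A})$, and $\mathcal{A}$ is not injective by Theorem~\ref{lem.not injective rank 2}.

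The hard part is exactly this last step, and it is where the hypothesis $M=3$ is essential: the argument works because $\det$ has odd degree $3$, so its restriction to a real pencil of Hermitian matrices cannot be sign-definite and must vanish nontrivially. For even $M$ the analogous binary form has even degree and can remain strictly positive (the determinant of a definite pencil), so this clean reduction breaks down---consistent with the fact that the HMW test is stated only for $M=3$.
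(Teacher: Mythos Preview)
Your proof is correct and follows essentially the same approach as the paper's: both reduce to Theorem~\ref{lem.not injective rank 2}, handle the $0$- and $1$-dimensional null space cases identically, and in the $\geq 2$-dimensional case exploit that the determinant restricted to a real pencil $aH_1+bH_2$ is a real cubic form, which must vanish nontrivially. The only cosmetic difference is the parametrization of the pencil---the paper uses $A\cos t + B\sin t$ on $[0,\pi]$ and applies the intermediate value theorem via $\det(-A)=-\det(A)$, whereas you use the affine line $(t,1)$ and the fact that an odd-degree real polynomial has a real root---but these are equivalent manifestations of the same idea.
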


\begin{proof}
First, if $\mathbf{A}$ is injective, then $\mathcal{A}(x)=\mathbf{A}xx^*=\mathbf{A}yy^*=\mathcal{A}(y)$ if and only if $xx^*=yy^*$, i.e., $y\equiv x\bmod\mathbb{T}$.
Next, suppose $\mathbf{A}$ has a $1$-dimensional null space.
Then Lemma~\ref{lem.not injective rank 2} gives that $\mathcal{A}$ is injective if and only if the null space of $\mathbf{A}$ is spanned by a matrix of full rank.
Finally, if the dimension of the null space is $2$ or more, then there exist linearly independent (nonzero) matrices $A$ and $B$ in this null space.
If $\det{(A)}=0$, then it must have rank $1$ or $2$, and so Lemma~\ref{lem.not injective rank 2} gives that $\mathcal{A}$ is not injective.
Otherwise, consider the map
\begin{equation*}
f\colon t\mapsto \det{(A\cos{t}+B\sin{t})}\qquad\forall t\in[0,\pi].
\end{equation*}
Since $f(0)=\det{(A)}$ and $f(\pi)=\det{(-A)}=(-1)^3\det{(A)}=-\det{(A)}$, the intermediate value theorem gives that there exists $t_0\in[0,\pi]$ such that $f(t_0)=0$, i.e., the matrix $A\cos{t_0}+B\sin{t_0}$ is singular.
Moreover, this matrix is nonzero since $A$ and $B$ are linearly independent, and so its rank is either $1$ or $2$.
Lemma~\ref{lem.not injective rank 2} then gives that $\mathcal{A}$ is not injective.
\qquad
\end{proof}

As an example, we may run the HMW test on the columns of the following matrix:
\begin{equation}
\label{eq.random cahill example}
\Phi
=\left[
\begin{array}{rrrrrrrr}
2&~~1&~~1&~~0&~~0&~~0&~~1&~~\mathrm{i}\\
-1&0&0&1&1&-1&-2&2\\
0&1&-1&1&-1&2\mathrm{i}&\mathrm{i}&-1
\end{array}
\right].
\end{equation}
In this case, the null space of $\mathbf{A}$ is $1$-dimensional and spanned by a nonsingular matrix.
As such, $\mathcal{A}$ is injective. 
We are now ready to approach the $4M-4$ conjecture in the $M=3$ case:

\begin{theorem}[Theorem~12 in~\cite{BandeiraCMN:13}]
\label{thm.4M4 true 3}
The $4M-4$ conjecture is true when $M=3$.
\end{theorem}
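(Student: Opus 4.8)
The plan is to verify both halves of the phase-transition definition at $N=4M-4=8$: part (a), that $\mathrm{Inj}[\Phi;\mathbb{C}^{3\times N}]$ fails for \emph{every} $\Phi$ when $N<8$, and part (b), that for each $N\geq 8$ the injective ensembles form an open, dense subset of $\mathbb{C}^{3\times N}$. Throughout I would lean on the super analysis operator $\mathbf{A}\colon\mathbb{H}^{3\times 3}\to\mathbb{R}^N$ and on Theorem~\ref{lem.not injective rank 2}, which reduces injectivity to the absence of nonzero rank-$1$ or rank-$2$ matrices in $\operatorname{null}(\mathbf{A})$.

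For part (a) I would reuse the sign-change argument that validated the HMW test. When $N\leq 7$, the operator $\mathbf{A}$ is a real linear map from the $9$-dimensional space $\mathbb{H}^{3\times 3}$ to $\mathbb{R}^N$, so $\dim\operatorname{null}(\mathbf{A})\geq 9-N\geq 2$. Choosing linearly independent $A,B\in\operatorname{null}(\mathbf{A})$ and setting $f(t):=\det(A\cos t+B\sin t)$, the key point is that $\det$ is homogeneous of \emph{odd} degree $M=3$, whence $f(\pi)=\det(-A)=-f(0)$; the intermediate value theorem then yields a $t_0$ with $f(t_0)=0$. The matrix $A\cos t_0+B\sin t_0$ is nonzero by linear independence and singular, hence of rank $1$ or $2$, so Theorem~\ref{lem.not injective rank 2} forces non-injectivity. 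Since this uses nothing about $\Phi$ beyond $N\leq 7$, it rules out injectivity for all $\Phi$, giving part (a). This is genuinely stronger than the Heinosaari--Mazzarella--Wolf bound, which for $M=3$ (where $\alpha(M-1)=1$) only gives $N^*(3)\geq 7$.

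For part (b) the crux is $N=8$, where $\mathbf{A}$ has an $8\times 9$ real matrix representation and so $\operatorname{rank}(\mathbf{A})\leq 8$. I would manufacture a single real polynomial $p$ on $\mathbb{C}^{3\times 8}$ whose non-vanishing is equivalent to injectivity. Let $H(\Phi)$ be the vector of signed $8\times 8$ cofactors of $\mathbf{A}$ (the standard kernel formula for a full-rank $N\times(N+1)$ matrix), read back into $\mathbb{H}^{3\times 3}$ through the fixed real-linear identification $\mathbb{H}^{3\times 3}\cong\mathbb{R}^9$; its entries are polynomials in the real and imaginary parts of $\Phi$, and $p(\Phi):=\det H(\Phi)$ is again a real polynomial. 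I would then establish that $\mathcal{A}$ is injective if and only if $p(\Phi)\neq 0$ by splitting on the rank of $\mathbf{A}$. If $\operatorname{rank}(\mathbf{A})\leq 7$, every $8\times 8$ cofactor vanishes, so $H(\Phi)=0$, $p(\Phi)=0$, and the part-(a) argument already shows $\mathcal{A}$ is not injective. If $\operatorname{rank}(\mathbf{A})=8$, then $H(\Phi)$ is a nonzero spanning vector of the $1$-dimensional null space, and the HMW test says $\mathcal{A}$ is injective exactly when this spanning matrix is nonsingular, i.e.\ when $p(\Phi)=\det H(\Phi)\neq 0$. Thus the non-injective locus is precisely the real variety $\{p=0\}$, which is proper because the ensemble in \eqref{eq.random cahill example} is an injective $3\times 8$ example and so $p\not\equiv 0$. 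Hence $\{p\neq 0\}$ is open, dense, and consists of injective ensembles.

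Finally I would lift part (b) from $N=8$ to every $N>8$ by monotonicity: if any eight columns of $\Phi$ already give injective intensity measurements, then so does all of $\Phi$, since equality of the full measurement vector forces equality on those eight coordinates. Therefore $\{\Phi\in\mathbb{C}^{3\times N}:p(\text{first eight columns})\neq 0\}$ is the preimage of an open dense set under a coordinate projection, hence open and dense, and each of its members is injective. The main obstacle is the $N=8$ construction itself: checking that the cofactor kernel formula produces an $H(\Phi)$ with polynomial entries and that the characterization ``injective if and only if $p(\Phi)\neq 0$'' survives the rank-deficient case, and confirming by computation that \eqref{eq.random cahill example} is genuinely injective so that $p$ is not the zero polynomial.
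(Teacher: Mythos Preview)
Your proposal is correct and follows essentially the same route the paper sketches: the generalized cross product (your cofactor vector $H(\Phi)$) to parametrize the one-dimensional null space when $N=8$, the HMW test to reduce injectivity to $\det H(\Phi)\neq 0$, and the explicit ensemble \eqref{eq.random cahill example} to certify that the resulting real variety is proper. You have fleshed out two points the paper leaves implicit---the part-(a) argument via the sign-change trick from the HMW analysis, and the monotonicity lift to $N>8$---but these are exactly what the paper's sketch would unpack to.
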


This is proved using the HMW test.
In this case, $4M-4=8=M^2-1$, meaning when $N=4M-4$, the null space of $\mathbf{A}$ should typically be $1$-dimensional.
As such, the null space can be described algebraically in terms of a generalized cross product, and this is leveraged along with the HMW test to construct a real algebraic variety containing all $\mathbf{A}$'s which are not injective; the construction in \eqref{eq.random cahill example} is then used to prove that the complement of this variety is nonempty, thereby proving part (b).

Recently, the American Institute of Mathematics hosted a workshop called ``Frame Theory Intersects Geometry,'' where experts from the two communities discussed various problems, including the $4M-4$ conjecture.
One outcome of this workshop was a paper by Conca, Edidin, Hering and Vinzant~\cite{ConcaEHV:13}, which makes a major stride toward solving the $4M-4$ conjecture:

\begin{theorem}[Theorem~1.1 and Proposition~5.4 in~\cite{ConcaEHV:13}]
\
\begin{itemize}
\item[(a)] Part (a) of the $4M-4$ conjecture is true whenever $M=2^k+1$.
\item[(b)] Part (b) of the $4M-4$ conjecture is true.
\end{itemize}
\end{theorem}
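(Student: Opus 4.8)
The plan is to handle the two parts by completely different machinery: part (b) is a genericity statement proved by algebraic geometry, whereas part (a) is a sharpened lower bound proved by differential topology. I would carry them out in that order, since part (b) reuses the rank characterization already in hand.

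For part (b), I would start from Theorem~\ref{lem.not injective rank 2}: $\Phi$ fails to be injective exactly when $\operatorname{null}(\mathbf{A})$ contains a nonzero Hermitian matrix $Q$ of rank at most $2$. This lets me realize the bad frames as the image of an incidence variety. Concretely, consider pairs $(\Phi,[Q])$, where $[Q]$ ranges over the projectivization of the cone of rank-$\leq2$ matrices and $\langle Q,\varphi_n\varphi_n^*\rangle_{\mathrm{HS}}=0$ for every $n$, and take the bad locus to be the image of this set under projection onto $\Phi$. I would then count dimensions: the projectivized rank-$\leq2$ locus has dimension $4M-5$ (using the $4M-4$ count of real degrees of freedom already recorded before~\eqref{eq.4M-4 reasoning}), while for each fixed $[Q]$ the condition $\varphi_n^*Q\varphi_n=0$ cuts out one hypersurface per index, leaving $N(2M-1)$ dimensions of frames. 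Hence the incidence variety has dimension $(4M-5)+N(2M-1)$, and since the ambient frame space has dimension $2MN$, the image is forced to be proper precisely when $(4M-5)+N(2M-1)<2MN$, i.e.\ $N\geq 4M-4$.

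The catch — and the real work — is that this count is not legitimate over $\mathbb{R}$: as the cone example following~\eqref{eq.4M-4 reasoning} shows, the projective dimension theorem fails over the reals, so a low-dimensional source need not map onto a proper real subvariety. I would resolve this exactly as the text foreshadows, by complexifying. Replacing $\overline{\varphi_n}$ by an independent vector $\omega_n$, the non-holomorphic form $\varphi_n^*Q\varphi_n$ becomes the bilinear (holomorphic) form $\omega_n^\top Q\varphi_n$, and the rank-$\leq2$ condition becomes the vanishing of all $3\times3$ minors of a general complex matrix $Q$; the whole incidence variety is now complex projective, with the identical dimension count. The projective dimension theorem (Theorem~7.2 of~\cite{Hartshorne:77}) now applies, and properness of the projection (closedness of projection from a projective factor) shows the complex bad locus is a proper complex subvariety whenever $N\geq 4M-4$. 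A proper complex subvariety cannot contain all of the real frames, since a polynomial vanishing on $\mathbb{R}^{2MN}$ vanishes identically; thus the real bad frames lie in the zero set of a nonzero real polynomial, whose complement is the desired open, dense set, and the same containment holds for every $N\geq 4M-4$.

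For part (a) I would instead use the embedding reduction of Heinosaari, Mazzarella and Wolf: if $\mathcal{A}$ is injective with $N$ measurements, then $x\mapsto\mathcal{A}(x)/\|x\|^2$ smoothly embeds $\mathbb{C}\mathbf{P}^{M-1}$ into $\mathbb{R}^N$. When $M=2^k+1$ with $k\geq1$ we have $M-1=2^k$ and $\alpha(M-1)=1$, so Mayer's nonembedding of $\mathbb{C}\mathbf{P}^{2^k}$ into $\mathbb{R}^{4\cdot 2^k-2}$ already yields $N\geq 4\cdot 2^k-1=4M-5$ — exactly the bound~\eqref{eq.HMW lower bound}, and one short of the target (the base case $M=2$, i.e.\ $k=0$, is instead the one settled earlier by the determinant argument). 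The plan is to close this single-dimension gap by ruling out $N=4M-5$, i.e.\ by proving the sharp nonembedding $\mathbb{C}\mathbf{P}^{2^k}\not\hookrightarrow\mathbb{R}^{4\cdot 2^k-1}=\mathbb{R}^{4M-5}$. Here I would warn that the crude obstruction is hopeless: the total Stiefel--Whitney class $(1+a)^{2^k+1}$ has dual class $1+a+\cdots+a^{2^k-1}$, whose top term $a^{2^k-1}$ forbids only embeddings into $\mathbb{R}^{4\cdot 2^k-3}$ by Whitney duality — weaker still than Mayer. The main obstacle is therefore precisely this last dimension in part (a): one needs a topological obstruction genuinely finer than the top dual Stiefel--Whitney class and special to $n=2^k$, where the single $1$ in the binary expansion of $M-1$ makes the threshold land exactly at $4M-4$. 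By contrast, the main difficulty in part (b) is organizational rather than deep — setting up the complexification correctly so that the honest complex dimension theorem performs the count that the real one cannot.
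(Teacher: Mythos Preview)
Your treatment of part~(b) is essentially the paper's own sketch: lift to the incidence variety of pairs $(\Phi,Q)$, count $(4M-5)+N(2M-1)$ against $2MN$, and project. Your added paragraph on complexification --- replacing $\overline{\varphi_n}$ by an independent $\omega_n$ so the incidence condition becomes bilinear and the projective dimension theorem applies --- is exactly the maneuver that makes the count rigorous, and is what the paper's sketch suppresses. So far, so good.

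The problem is part~(a). Your plan is to upgrade Mayer's nonembedding $\mathbb{C}\mathbf{P}^{2^k}\not\hookrightarrow\mathbb{R}^{4\cdot 2^k-2}$ by one dimension, and you correctly identify that Stiefel--Whitney duality is too coarse to do this. But you never supply the finer obstruction, and in fact no such sharpened nonembedding result for $\mathbb{C}\mathbf{P}^{2^k}$ is invoked (or, to my knowledge, available) here. More to the point, this is \emph{not} the route the paper (following Conca--Edidin--Hering--Vinzant) takes. Their argument for~(a) stays inside the algebraic framework you already built for~(b): when $N=4M-5$ the complexified incidence variety projects \emph{onto} the space of complex frames, and one computes the degree of the generic fiber. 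The ``integrality conditions'' the paper alludes to are that this degree is an integer with a specific binomial expression, and when $M-1=2^k$ that integer is odd. Since complex conjugation acts on the fiber over a real $\Phi$ and pairs up nonreal points, an odd fiber degree forces a real $Q$ of rank~$\leq 2$ in $\operatorname{null}(\mathbf{A})$ for \emph{every} $\Phi$, killing injectivity.

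So the gap is not just an unfinished step in your topological program; the actual proof abandons the embedding reduction entirely for~(a) and instead pushes the same complexified incidence variety from~(b) one step further, using a parity-of-degree argument rather than a characteristic-class obstruction.
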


\begin{proof}[sketch]
The proof of (a) uses certain integrality conditions, similar to the proofs of embedding results for complex projective space.
Part (b) is proved using the following basic ideas:
Consider the set of all $M\times N$ complex matrices $\Phi$.
This set has real dimension $2MN$.
The goal is to show that the set of ``bad'' $\Phi$'s (those which fail to yield injectivity) has strictly smaller dimension.
To do this, note from Theorem~\ref{lem.not injective rank 2} that $\Phi=\{\varphi_n\}_{n=1}^N$ is bad precisely when there is an $M\times M$ matrix $Q$ of rank $\leq2$ and Frobenius norm $1$ such that $\varphi_n^*Q\varphi_n=0$ for every $n=1,\ldots,N$.
As such, we lift to the set of pairs $(\Phi,Q)$ which satisfy this relation.
Counting the dimension of this lifted set, we note that the set of $Q$'s has $4M-5$ real dimensions, and for each $Q$ and each $n$, there is a $(2M-1)$-dimensional set of $\varphi_n$'s such that $\varphi_n^*Q\varphi_n=0$.
Thus, the total dimension of bad pairs $(\Phi,Q)$ is $4M-5+(2M-1)N$.
Recall that the bad set we care about is the set of $\Phi$'s for which there exists a $Q$ such that $(\Phi,Q)$ is bad, and so we get our set by projection.
Also, projections never increase the dimension of a set, and so the dimension of our set of bad $\Phi$'s is $\leq 4M-5+(2M-1)N$.
As such, to ensure that this dimension is less than the ambient dimension $2MN$, it suffices to have $4M-5+(2M-1)N<2MN$, or equivalently, $N\geq 4M-4$.
Thus, generic $M\times N$ $\Phi$'s with $N\geq4M-4$ are not bad.
\end{proof}

Note that this result contains the previous cases where $M=2,3$, and the first remaining open case is $M=4$.
On my blog, I offer a US\$100 prize for a proof of the $4M-4$ conjecture, and a can of Coca-Cola for a disproof~\cite{Mixon:13}.

\subsubsection{Minimal Constructions with Injectivity in the Complex Case}

In the absence of a ``good'' characterization of injectivity in the complex case, it is interesting to see explicit minimal constructions, as these shed some insight into the underlying structure of such ensembles.
To this end, there are presently two general constructions, which we describe here.

The construction of Bodmann and Hammen~\cite{BodmannH:13} considers the case where the measurement vectors form a certain harmonic frame.
Specifically, take the $(2M-1)\times(2M-1)$ discrete Fourier transform matrix and collect the first $M$ rows; then the resulting columns are the $M$-dimensional measurement vectors. 
Note that if the original signal is known to be real, then $2M-1$ measurements are injective whenever the measurement vectors are full spark, as this particular harmonic frame is (since all $M\times M$ submatrices are Vandermonde with distinct bases).
Analogously, Bernhard and Hammen exploit the Fejer--Riesz spectral factorization theorem to say that these measurements completely determine signals from another interesting class. 
To be clear, identify a vector $(c_m)_{m=0}^{M-1}$ with the polynomial $\sum_{m=0}^{M-1}c_mz^m$; then Bernhard and Hammen uniquely determine the vector if the roots of the corresponding polynomial all lie outside the open unit disk in the complex plane. 
In general, they actually say the polynomial is one of $2^M$ possibilities; here, the only ambiguity is whether a given root is at $z_m$ or $1/\overline{z_m}$, that is, we can flip any root from outside to inside the disk.
Note that this is precisely how much ambiguity we have in the real case after taking only $M$ measurements, and in that case, we know it suffices to take only $M-1$ additional measurements.
Next, in addition to taking these $2M-1$ measurements from before (viewed as equally spaced points on the complex unit circle), they also take $2M-1$ measurements corresponding to equally spaced points on another unit circle in the complex plane, this one being the image of the real line under a specially chosen (think ``sufficiently irrational'') Cayley map.
However, this makes a total of $4M-2$ measurements, whereas the goal is to find $4M-4$ injective measurements -- to fix this, they actually pick the second circle in such a way that it intersects the first at two points, and that these intersection points correspond to measurements from both circles, so we might as well throw two of them away.

The second known construction is due to Fickus, Nelson, Wang and the author~\cite{FickusMNW:13}.
Here, we apply two main ideas: (1) a signal's intensity measurements with the Fourier transform is the Fourier transform of the signal's autocorrelation; and (2) if a real, even function is sufficiently zero-padded, then it can be recovered (up to a global sign factor) from its autocorrelation.
(Verifying (2) in small dimensions is a fun exercise.)
In~\cite{FickusMNW:13}, we show how to generalize (2) to completely determine zero-padded complex functions, and then we identify these autocorrelations as the inverse Fourier transforms of intensity measurements with $4M-2$ truncated and modulated discrete cosine functions.
At this point, we identify certain redundancies in our intensity measurements -- two of them are completely determined by the others, so we can remove them.

Considering these minimal constructions, it is striking that they both come from a construction of size $4M-2$.
This begs the following question:
Does every injective ensemble of size $4M-2$ contain an injective ensemble of size $4M-4$?

\section{Almost Injectivity}

In both the real and complex cases, there appears to be a phase transition which governs how many intensity measurements are necessary and generically sufficient for injectivity.
Interestingly, one can save a factor of $2$ in this number of measurements by slightly weakening the desired notion of injectivity~\cite{BalanCE:06,FlammiaSC:05}.
To be explicit, we start with the following definition:

\begin{definition}
\label{def.almost injective}
The intensity measurement mapping $\mathcal{A}$ is said to be \emph{almost injective} if $\mathcal{A}^{-1}(\mathcal{A}(x))=\{\omega x:|\omega|=1\}$ for every $x$ in an open, dense subset of $\mathbb{F}^M$.
$\operatorname{AlmInj}[\Phi,\mathbb{F}^{M\times N}]$ denotes the statement that the intensity measurement mapping $\mathcal{A}$ associated with $\Phi$ is almost injective.
\end{definition}

This section studies the phase transition for almost injectivity.
Much like injectivity, we have a much better understanding of the real case than the complex case, and we consider these separately.

\subsection{Almost Injectivity in the Real Case}

In this section, we start by characterizing ensembles of measurement vectors which yield almost injective intensity measurements, and similar to the characterization of injectivity, the basic idea behind the analysis is to consider sums and differences of signals with identical intensity measurements.
Our characterization starts with the following lemma:

\begin{theorem}[Lemma~9 in~\cite{FickusMNW:13}]
\label{lem.almost injective and Minkowski sum}
Consider $\Phi=\{\varphi_n\}_{n=1}^N\subseteq\mathbb{R}^M$ and the intensity measurement mapping $\mathcal{A}\colon\mathbb{R}^M/\{\pm1\}\rightarrow\mathbb{R}^N$ defined by $(\mathcal{A}(x))(n):=|\langle x,\varphi_n\rangle|^2$. 
Then $\mathcal{A}$ is almost injective if and only if almost every $x\in\mathbb{R}^M$ is not in the Minkowski sum
$\operatorname{span}(\Phi_{S})^\perp\setminus\{0\}+\operatorname{span}(\Phi_{S^\mathrm{c}})^\perp\setminus\{0\}$
for all $S\subseteq\{1,\ldots,N\}$. 
More precisely, $\mathcal{A}^{-1}(\mathcal{A}(x))=\{\pm x\}$ if and only if $x\notin\operatorname{span}(\Phi_{S})^\perp\setminus\{0\}+\operatorname{span}(\Phi_{S^\mathrm{c}})^\perp\setminus\{0\}$ for any $S\subseteq\{1,\ldots,N\}$.
\end{theorem}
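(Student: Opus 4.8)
The plan is to prove the ``more precise'' pointwise statement first, since the opening claim about ``almost every'' $x$ will follow from it by an elementary argument separating the measure-theoretic and topological pictures. The pointwise statement is an equivalence whose two directions mirror the sum-and-difference computation in the proof of Theorem~\ref{thm.complement property characterization}; the only new bookkeeping is that one must track the explicit decomposition $x = u + v$ rather than merely the existence of a violating pair.

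For the direction that produces non-recovery, I would start from a witness $x = u + v$ with $u \in \operatorname{span}(\Phi_S)^\perp \setminus \{0\}$ and $v \in \operatorname{span}(\Phi_{S^\mathrm{c}})^\perp \setminus \{0\}$, and set $y := u - v$. For $n \in S$ we have $\langle u, \varphi_n\rangle = 0$, so $\langle x, \varphi_n\rangle = \langle v, \varphi_n\rangle = -\langle y, \varphi_n\rangle$; for $n \in S^\mathrm{c}$ we have $\langle v, \varphi_n\rangle = 0$, so $\langle x, \varphi_n\rangle = \langle u, \varphi_n\rangle = \langle y, \varphi_n\rangle$. In either case $|\langle x, \varphi_n\rangle| = |\langle y, \varphi_n\rangle|$, hence $\mathcal{A}(x) = \mathcal{A}(y)$; and since $u, v \neq 0$ we get $y \neq \pm x$, so $\mathcal{A}^{-1}(\mathcal{A}(x)) \neq \{\pm x\}$. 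For the converse I would argue the contrapositive: given $y$ with $\mathcal{A}(y) = \mathcal{A}(x)$ and $y \neq \pm x$, set $S := \{ n : \langle x, \varphi_n\rangle = -\langle y, \varphi_n\rangle\}$. Because $|\langle x, \varphi_n\rangle| = |\langle y, \varphi_n\rangle|$ forces $\langle x, \varphi_n\rangle = \pm\langle y, \varphi_n\rangle$ in the real case, the vectors $u := \tfrac{1}{2}(x+y)$ and $v := \tfrac{1}{2}(x - y)$ satisfy $u \in \operatorname{span}(\Phi_S)^\perp$ and $v \in \operatorname{span}(\Phi_{S^\mathrm{c}})^\perp$, are both nonzero (as $y \neq \pm x$), and sum to $x$; thus $x$ lies in the Minkowski sum indexed by $S$.

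Finally I would deduce the first statement. By the pointwise equivalence, the good set $\{x : \mathcal{A}^{-1}(\mathcal{A}(x)) = \{\pm x\}\}$ is exactly the complement of $\bigcup_S \big( \operatorname{span}(\Phi_S)^\perp \setminus \{0\} + \operatorname{span}(\Phi_{S^\mathrm{c}})^\perp \setminus \{0\} \big)$, a finite union over the $2^N$ subsets $S$. The only nonempty terms come from those $S$ for which both $\Phi_S$ and $\Phi_{S^\mathrm{c}}$ fail to span, and each such term is contained in the subspace $W_S := \operatorname{span}(\Phi_S)^\perp + \operatorname{span}(\Phi_{S^\mathrm{c}})^\perp$. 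The main subtlety, and the step I expect to require the most care, is reconciling the measure-theoretic phrasing (``almost every $x$'') with the topological notion in Definition~\ref{def.almost injective} (an open, dense set). This is resolved by a dichotomy: if every relevant $W_S$ is a proper subspace, the bad set lies in a finite union of proper subspaces, which is simultaneously Lebesgue-null and nowhere dense, so its complement is both full-measure and open-dense and $\mathcal{A}$ is almost injective; whereas if some $W_S = \mathbb{R}^M$ with both factors nonzero, then every $x$ admits a decomposition $x = u + v$ with $u, v$ nonzero (immediate when $\operatorname{span}(\Phi_S)^\perp \cap \operatorname{span}(\Phi_{S^\mathrm{c}})^\perp \neq \{0\}$, and valid off a null set otherwise), so the corresponding Minkowski sum has full measure and both formulations fail together. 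Hence the two phrasings coincide, which closes the proof.
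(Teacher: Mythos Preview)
Your proof of the pointwise statement is correct and matches the paper's argument essentially verbatim: the paper also reduces $\mathcal{A}(x)=\mathcal{A}(y)$ to a sign pattern indexed by some $S$, then passes between the pair $(x,y)$ and the decomposition $x=u+v$, $y=u-v$ via $u=\tfrac12(x+y)$, $v=\tfrac12(x-y)$ exactly as you do. Your final paragraph goes beyond the paper's proof, which simply stops after the pointwise equivalence; your dichotomy reconciling the open--dense formulation of Definition~\ref{def.almost injective} with the measure-theoretic phrase ``almost every $x$'' is correct and in fact anticipates the argument the paper only makes later in the proof of Theorem~\ref{thm.almost injective and Minkowski sum proper subspace}.
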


\begin{proof}
By the definition of the mapping $\mathcal{A}$, for $x,y\in\mathbb{R}^M$ we have $\mathcal{A}(x)=\mathcal{A}(y)$ if and only if $|\langle x,\varphi_n\rangle|=|\langle y,\varphi_n\rangle|$ for all $n\in\{1,\ldots,N\}$.
This occurs precisely when there is a subset $S\subseteq\{1,\ldots,N\}$ such that $\langle x,\varphi_n\rangle=-\langle y,\varphi_n\rangle$ for every $n\in S$ and $\langle x,\varphi_n\rangle=\langle y,\varphi_n\rangle$ for every $n\in S^\mathrm{c}$.
Thus, $\mathcal{A}^{-1}(\mathcal{A}(x))=\{\pm x\}$ if and only if for every $y\neq\pm x$ and for every $S\subseteq\{1,\ldots,N\}$, either there exists an $n\in S$ such that $\langle x+y,\varphi_n\rangle\neq0$ or an $n\in S^\mathrm{c}$ such that $\langle x-y,\varphi_n\rangle\neq0$.
We claim that this occurs if and only if $x$ is not in the Minkowski sum
$\operatorname{span}(\Phi_{S})^\perp\setminus\{0\}+\operatorname{span}(\Phi_{S^\mathrm{c}})^\perp\setminus\{0\}$
for all $S\subseteq\{1,\ldots,N\}$, which would complete the proof.
We verify the claim by seeking the contrapositive in each direction.

$(\Rightarrow)$ Suppose
$x\in\operatorname{span}(\Phi_{S})^\perp\setminus\{0\}+\operatorname{span}(\Phi_{S^\mathrm{c}})^\perp\setminus\{0\}$.
Then there exists $u\in\operatorname{span}(\Phi_{S})^\perp\setminus\{0\}$ and $v\in\operatorname{span}(\Phi_{S^\mathrm{c}})^\perp\setminus\{0\}$ such that $x=u+v$.
Taking $y:=u-v$, we see that
$x+y=2u\in\operatorname{span}(\Phi_{S})^\perp\setminus\{0\}$
and
$x-y=2v\in\operatorname{span}(\Phi_{S^\mathrm{c}})^\perp\setminus\{0\}$,
which means that there is no $n\in S$ such that $\langle x+y,\varphi_n\rangle\neq0$ nor $n\in S^\mathrm{c}$ such that $\langle x-y,\varphi_n\rangle\neq0$.
Furthermore, $u$ and $v$ are nonzero, and so $y\neq\pm x$.

$(\Leftarrow)$ Suppose $y\neq\pm x$ and for every $S\subseteq\{1,\ldots,N\}$ there is no $n\in S$ such that $\langle x+y,\varphi_n\rangle\neq0$ nor $n\in S^\mathrm{c}$ such that $\langle x-y,\varphi_n\rangle\neq0$.
Then $x+y\in\operatorname{span}(\Phi_{S})^\perp\setminus\{0\}$ and $x-y\in\operatorname{span}(\Phi_{S^\mathrm{c}})^\perp\setminus\{0\}$.
Since
$x=\frac{1}{2}(x+y)+\frac{1}{2}(x-y)$,
we have that $x\in\operatorname{span}(\Phi_{S})^\perp\setminus\{0\}+\operatorname{span}(\Phi_{S^\mathrm{c}})^\perp\setminus\{0\}$.
\end{proof}

The above characterization can be simplified to form the following partial characterization of almost injectivity:

\begin{theorem}[Theorem~10 in~\cite{FickusMNW:13}]
\label{thm.almost injective and Minkowski sum proper subspace}
Consider $\Phi=\{\varphi_n\}_{n=1}^N\subseteq\mathbb{R}^M$ and the intensity measurement mapping $\mathcal{A}\colon\mathbb{R}^M/\{\pm1\}\rightarrow\mathbb{R}^N$ defined by $(\mathcal{A}(x))(n):=|\langle x,\varphi_n\rangle|^2$. 
Suppose $\Phi$ spans $\mathbb{R}^M$ and each $\varphi_n$ is nonzero.
Then $\mathcal{A}$ is almost injective if and only if the Minkowski sum
$\operatorname{span}(\Phi_{S})^\perp+\operatorname{span}(\Phi_{S^\mathrm{c}})^\perp$
is a proper subspace of $\mathbb{R}^M$ for each nonempty proper subset $S\subseteq\{1,\ldots,N\}$.
\end{theorem}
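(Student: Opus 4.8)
The plan is to derive this partial characterization directly from the Minkowski-sum criterion of Theorem~\ref{lem.almost injective and Minkowski sum}, upgrading its pointwise, punctured-subspace condition into the clean statement about whether the full subspace sum $\operatorname{span}(\Phi_S)^\perp+\operatorname{span}(\Phi_{S^\mathrm{c}})^\perp$ fills all of $\mathbb{R}^M$. To streamline, abbreviate $U_S:=\operatorname{span}(\Phi_S)^\perp$ and $V_S:=\operatorname{span}(\Phi_{S^\mathrm{c}})^\perp$, so the relevant Minkowski sum is $M_S:=(U_S\setminus\{0\})+(V_S\setminus\{0\})$. Since there are only finitely many subsets $S$, Theorem~\ref{lem.almost injective and Minkowski sum} says that $\mathcal{A}$ is almost injective if and only if each $M_S$ has Lebesgue measure zero, so the entire problem reduces to analyzing $M_S$ one subset at a time.

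First I would dispose of the extreme subsets. When $S=\emptyset$ or $S=\{1,\ldots,N\}$, the spanning hypothesis forces one of $U_S,V_S$ to equal $\{0\}$, whence the corresponding punctured set $\{0\}\setminus\{0\}$ is empty and $M_S=\emptyset$. This is precisely where the hypothesis that $\Phi$ spans is used, and it explains why the theorem only quantifies over nonempty proper $S$. For such $S$, both $S$ and $S^\mathrm{c}$ are nonempty, and since every $\varphi_n\neq0$ we get $\operatorname{span}(\Phi_S)\neq\{0\}\neq\operatorname{span}(\Phi_{S^\mathrm{c}})$; hence $U_S$ and $V_S$ are each a \emph{proper} subspace of $\mathbb{R}^M$. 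This properness is the second place the hypotheses enter, and it is crucial below.

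With these reductions in hand the two directions are short. If $U_S+V_S$ is a proper subspace, then $M_S\subseteq U_S+V_S$ lies in a set of measure zero, so $M_S$ is null; doing this for every nonempty proper $S$ (and recalling $M_S=\emptyset$ otherwise) yields almost injectivity. Conversely, suppose $U_S+V_S=\mathbb{R}^M$ for some nonempty proper $S$. The crux is the claim that $\mathbb{R}^M\setminus M_S\subseteq U_S\cup V_S$: if $x\notin U_S\cup V_S$, write $x=u+v$ with $u\in U_S$ and $v\in V_S$ using surjectivity of the sum; were $u=0$ we would have $x=v\in V_S$, and were $v=0$ we would have $x=u\in U_S$, both contradictions, so $u,v\neq0$ and $x\in M_S$. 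Since $U_S$ and $V_S$ are proper subspaces, $U_S\cup V_S$ is null, so $M_S$ has full measure; by Theorem~\ref{lem.almost injective and Minkowski sum}, $\mathcal{A}$ fails to be almost injective. Taking contrapositives assembles the equivalence.

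The main obstacle -- and the only genuinely nontrivial point -- is handling the punctures ``$\setminus\{0\}$'' in the Minkowski sum: a priori $M_S$ could be far smaller than $U_S+V_S$ because both summands are required to be nonzero. The one-line decomposition argument above resolves exactly this, showing that removing the origin from each factor costs only the measure-zero set $U_S\cup V_S$; this is where it is essential that \emph{both} $U_S$ and $V_S$ be proper, so that their union remains null. Everything else is bookkeeping about the extreme subsets and the fact that a finite union of null sets is null.
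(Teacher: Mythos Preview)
Your proof is correct and follows essentially the same approach as the paper: both reduce via Theorem~\ref{lem.almost injective and Minkowski sum} to showing that each punctured Minkowski sum $M_S$ is null precisely when $U_S+V_S$ is a proper subspace, handle the extreme subsets using the spanning hypothesis, and use the nonzero-$\varphi_n$ hypothesis to make $U_S,V_S$ proper so that $U_S\cup V_S$ is null. The only organizational difference is that the paper first records the exact identity $M_S=(U_S+V_S)\setminus(U_S\cup V_S)$ (using $U_S\cap V_S=\{0\}$ from spanning) and then runs a measure sandwich, whereas you bypass the identity and argue the two directions directly via containments; your route is marginally lighter since it never needs the intersection fact explicitly.
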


Note that the above result is not terribly surprising considering Theorem~\ref{lem.almost injective and Minkowski sum}, as the new condition involves a simpler Minkowski sum in exchange for additional (reasonable and testable) assumptions on $\Phi$.
The proof of this theorem amounts to measuring the difference between the two Minkowski sums:

\begin{proof}[Proof of Theorem~\ref{thm.almost injective and Minkowski sum proper subspace}]
First note that the spanning assumption on $\Phi$ implies
\begin{equation*}
\operatorname{span}(\Phi_S)^\perp\cap\operatorname{span}(\Phi_{S^\mathrm{c}})^\perp
=\bigl(\operatorname{span}(\Phi_S)+\operatorname{span}(\Phi_{S^\mathrm{c}})\bigr)^\perp
=\operatorname{span}(\Phi)^\perp
=\{0\},
\end{equation*}
and so one can prove the following identity:
\begin{align}
\nonumber
&\operatorname{span}(\Phi_{S})^\perp\setminus\{0\}+\operatorname{span}(\Phi_{S^\mathrm{c}})^\perp\setminus\{0\}\\
\label{eq.Minkowski sum equality}
&\qquad=\left(\operatorname{span}(\Phi_{S})^\perp+\operatorname{span}(\Phi_{S^\mathrm{c}})^\perp\right)
\setminus\left(\operatorname{span}(\Phi_{S})^\perp\cup\operatorname{span}(\Phi_{S^\mathrm{c}})^\perp\right).
\end{align}
From Theorem~\ref{lem.almost injective and Minkowski sum} we know that $\mathcal{A}$ is almost injective if and only if almost every $x\in\mathbb{R}^M$ is not in the Minkowski sum
$\operatorname{span}(\Phi_{S})^\perp\setminus\{0\}+\operatorname{span}(\Phi_{S^\mathrm{c}})^\perp\setminus\{0\}$
for any $S\subseteq\{1,\ldots,N\}$.
In other words, the Lebesgue measure (which we denote by $\operatorname{Leb}[\cdot]$) of this Minkowski sum is zero for each $S\subseteq\{1,\ldots,N\}$.
By \eqref{eq.Minkowski sum equality}, this equivalently means that the Lebesgue measure of
$\left(\operatorname{span}(\Phi_{S})^\perp+\operatorname{span}(\Phi_{S^\mathrm{c}})^\perp\right)
\setminus\left(\operatorname{span}(\Phi_{S})^\perp\cup\operatorname{span}(\Phi_{S^\mathrm{c}})^\perp\right)$
is zero for each $S\subseteq\{1,\ldots,N\}$.
Since $\Phi$ spans $\mathbb{R}^M$, this set is empty (and therefore has Lebesgue measure zero) when $S=\emptyset$ or $S=\{1,\ldots,N\}$.
Also, since each $\varphi_n$ is nonzero, we know that $\operatorname{span}(\Phi_{S})^\perp$ and $\operatorname{span}(\Phi_{S^\mathrm{c}})^\perp$ are proper subspaces of $\mathbb{R}^M$ whenever $S$ is a nonempty proper subset of $\{1,\ldots,N\}$, and so in these cases both subspaces must have Lebesgue measure zero.
As such, we have that for every nonempty proper subset $S\subseteq\{1,\ldots,N\}$,
\begin{align*}
&\operatorname{Leb}\left[\left(\operatorname{span}(\Phi_{S})^\perp+\operatorname{span}(\Phi_{S^\mathrm{c}})^\perp\right)
\setminus\left(\operatorname{span}(\Phi_{S})^\perp\cup\operatorname{span}(\Phi_{S^\mathrm{c}})^\perp\right)\right]\\
&\quad\geq\operatorname{Leb}\left[\operatorname{span}(\Phi_{S})^\perp+\operatorname{span}(\Phi_{S^\mathrm{c}})^\perp\right]
-\operatorname{Leb}\left[\operatorname{span}(\Phi_{S})^\perp\right]-\operatorname{Leb}\left[\operatorname{span}(\Phi_{S^\mathrm{c}})^\perp\right]\\
&\quad=\operatorname{Leb}\left[\operatorname{span}(\Phi_{S})^\perp+\operatorname{span}(\Phi_{S^\mathrm{c}})^\perp\right]\\
&\quad\geq\operatorname{Leb}\left[\left(\operatorname{span}(\Phi_{S})^\perp+\operatorname{span}(\Phi_{S^\mathrm{c}})^\perp\right)\setminus\left(\operatorname{span}(\Phi_{S})^\perp\cup\operatorname{span}(\Phi_{S^\mathrm{c}})^\perp\right)\right].
\end{align*}
In summary, $\left(\operatorname{span}(\Phi_{S})^\perp+\operatorname{span}(\Phi_{S^\mathrm{c}})^\perp\right)
\setminus\left(\operatorname{span}(\Phi_{S})^\perp\cup\operatorname{span}(\Phi_{S^\mathrm{c}})^\perp\right)$
having Lebesgue measure zero for each $S\subseteq\{1,\ldots,N\}$ is equivalent to $\operatorname{span}(\Phi_{S})^\perp+\operatorname{span}(\Phi_{S^\mathrm{c}})^\perp$ having Lebesgue measure zero for each nonempty proper subset $S\subseteq\{1,\ldots,N\}$, which in turn is equivalent to the Minkowski sum $\operatorname{span}(\Phi_{S})^\perp+\operatorname{span}(\Phi_{S^\mathrm{c}})^\perp$ being a proper subspace of $\mathbb{R}^M$ for each nonempty proper subset $S\subseteq\{1,\ldots,N\}$, as desired.
\end{proof}

At this point, consider the following stronger restatement of Theorem~\ref{thm.almost injective and Minkowski sum proper subspace}: 
``Suppose each $\varphi_n$ is nonzero.
Then $\mathcal{A}$ is almost injective if and only if $\Phi$ spans $\mathbb{R}^M$ and the Minkowski sum
$\operatorname{span}(\Phi_{S})^\perp+\operatorname{span}(\Phi_{S^\mathrm{c}})^\perp$
is a proper subspace of $\mathbb{R}^M$ for each nonempty proper subset $S\subseteq\{1,\ldots,N\}$.''
Note that we can move the spanning assumption into the condition because if $\Phi$ does not span, then we can decompose almost every $x\in\mathbb{R}^M$ as $x=u+v$ such that $u\in\operatorname{span}(\Phi)$ and $v\in\operatorname{span}(\Phi)^\perp$ with $v\neq0$, and defining $y:=u-v$ then gives $\mathcal{A}(y)=\mathcal{A}(x)$ despite the fact that $y\neq\pm x$.
As for the assumption that the $\varphi_n$'s are nonzero, we note that having $\varphi_n=0$ amounts to having the $n$th entry of $\mathcal{A}(x)$ be zero for all $x$.
As such, $\Phi$ yields almost injectivity precisely when the nonzero members of $\Phi$ together yield almost injectivity.
With this identification, the stronger restatement of Theorem~\ref{thm.almost injective and Minkowski sum proper subspace} above can be viewed as a complete characterization of almost injectivity.
Next, we will replace the Minkowski sum condition with a rather elegant condition involving the ranks of $\Phi_S$ and $\Phi_{S^\mathrm{c}}$:

\begin{theorem}[Theorem~11 in~\cite{FickusMNW:13}]
\label{thm.almost injective and sum rank >M}
Consider $\Phi=\{\varphi_n\}_{n=1}^N\subseteq\mathbb{R}^M$ and the intensity measurement mapping $\mathcal{A}\colon\mathbb{R}^M/\{\pm1\}\rightarrow\mathbb{R}^N$ defined by $(\mathcal{A}(x))(n):=|\langle x,\varphi_n\rangle|^2$. 
Suppose each $\varphi_n$ is nonzero.
Then $\mathcal{A}$ is almost injective if and only if $\Phi$ spans $\mathbb{R}^M$ and $\operatorname{rank}\Phi_{S}+\operatorname{rank}\Phi_{S^\mathrm{c}}>M$
for each nonempty proper subset $S\subseteq\{1,\ldots,N\}$.
\end{theorem}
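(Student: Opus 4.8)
The plan is to reduce everything to the stronger restatement of Theorem~\ref{thm.almost injective and Minkowski sum proper subspace} recorded immediately after its proof: under the standing hypothesis that each $\varphi_n$ is nonzero, $\mathcal{A}$ is almost injective if and only if $\Phi$ spans $\mathbb{R}^M$ and the Minkowski sum $\operatorname{span}(\Phi_S)^\perp+\operatorname{span}(\Phi_{S^\mathrm{c}})^\perp$ is a proper subspace of $\mathbb{R}^M$ for every nonempty proper subset $S\subseteq\{1,\ldots,N\}$. Since the present theorem carries exactly the same spanning requirement on both sides of its ``if and only if,'' it suffices to prove the purely linear-algebraic equivalence: \emph{assuming $\Phi$ spans $\mathbb{R}^M$}, the Minkowski-sum condition holds for each nonempty proper $S$ if and only if $\operatorname{rank}\Phi_S+\operatorname{rank}\Phi_{S^\mathrm{c}}>M$ holds for each such $S$. (In the case where $\Phi$ fails to span, both the restated condition and the claimed rank condition explicitly fail, so the equivalence is trivial there.) Thus the whole argument collapses to a dimension count carried out subset by subset.

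First I would fix a nonempty proper subset $S$ and abbreviate $U:=\operatorname{span}(\Phi_S)^\perp$ and $W:=\operatorname{span}(\Phi_{S^\mathrm{c}})^\perp$, so that $\dim U=M-\operatorname{rank}\Phi_S$ and $\dim W=M-\operatorname{rank}\Phi_{S^\mathrm{c}}$ by the standard orthogonal-complement dimension identity in $\mathbb{R}^M$. The usual subspace dimension formula then gives $\dim(U+W)=\dim U+\dim W-\dim(U\cap W)$, so the only quantity left to pin down is $\dim(U\cap W)$.

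The key step — the one that makes the formula clean — is to observe that $U\cap W=\{0\}$. This is precisely the identity established at the very start of the proof of Theorem~\ref{thm.almost injective and Minkowski sum proper subspace}: because $\Phi$ spans $\mathbb{R}^M$, we have $\operatorname{span}(\Phi_S)^\perp\cap\operatorname{span}(\Phi_{S^\mathrm{c}})^\perp=(\operatorname{span}(\Phi_S)+\operatorname{span}(\Phi_{S^\mathrm{c}}))^\perp=\operatorname{span}(\Phi)^\perp=\{0\}$. With the intersection trivial, the dimension of the Minkowski sum is simply $\dim(U+W)=2M-\operatorname{rank}\Phi_S-\operatorname{rank}\Phi_{S^\mathrm{c}}$.

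Finally I would translate ``proper subspace'' into the rank inequality: the sum $U+W$ is a proper subspace of $\mathbb{R}^M$ exactly when $\dim(U+W)<M$, i.e. $2M-\operatorname{rank}\Phi_S-\operatorname{rank}\Phi_{S^\mathrm{c}}<M$, which rearranges to $\operatorname{rank}\Phi_S+\operatorname{rank}\Phi_{S^\mathrm{c}}>M$. Quantifying this equivalence over all nonempty proper subsets $S$ matches the Minkowski-sum condition with the rank condition and completes the reduction. I do not anticipate a genuine obstacle; the one point requiring care is the appeal to the spanning hypothesis to force $U\cap W=\{0\}$, since without it the dimension formula would retain an uncontrolled intersection term and the clean equivalence would break. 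It is worth flagging that the spanning assumption is therefore doing double duty — once inside the cited restatement and once in this dimension count — and so cannot be dropped.
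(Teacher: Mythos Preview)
Your proposal is correct and follows essentially the same approach as the paper: reduce to the Minkowski-sum characterization (the stronger restatement of Theorem~\ref{thm.almost injective and Minkowski sum proper subspace}), apply the subspace dimension formula $\dim(U+W)=\dim U+\dim W-\dim(U\cap W)$, use the spanning hypothesis to force $U\cap W=\{0\}$, and rearrange $2M-\operatorname{rank}\Phi_S-\operatorname{rank}\Phi_{S^\mathrm{c}}<M$ into the rank inequality. The paper's proof is organized identically, invoking the same intersection identity and the same dimension count.
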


\begin{proof}
Considering the discussion after the proof of Theorem~\ref{thm.almost injective and Minkowski sum proper subspace}, it suffices to assume that $\Phi$ spans $\mathbb{R}^M$.
Furthermore, considering Theorem~\ref{thm.almost injective and Minkowski sum proper subspace}, it suffices to characterize when $\dim\left(\operatorname{span}(\Phi_{S})^\perp+\operatorname{span}(\Phi_{S^\mathrm{c}})^\perp\right)<M$.
By the inclusion-exclusion principle for subspaces, we have
\begin{align*}
&\dim\left(\operatorname{span}(\Phi_{S})^\perp+\operatorname{span}(\Phi_{S^\mathrm{c}})^\perp\right)\\
&=\dim\left(\operatorname{span}(\Phi_{S})^\perp\right)
+\dim\left(\operatorname{span}(\Phi_{S^\mathrm{c}})^\perp\right)
-\dim\left(\operatorname{span}(\Phi_{S})^\perp\cap\operatorname{span}(\Phi_{S^\mathrm{c}})^\perp\right).
\end{align*}
Since $\Phi$ is assumed to span $\mathbb{R}^M$, we also have that
$\operatorname{span}(\Phi_{S})^\perp\cap\operatorname{span}(\Phi_{S^\mathrm{c}})^\perp=\{0\}$,
and so
\begin{align*}
&\dim\left(\operatorname{span}(\Phi_{S})^\perp+\operatorname{span}(\Phi_{S^\mathrm{c}})^\perp\right)\\
&\quad=\Big(M-\dim\left(\operatorname{span}(\Phi_{S})\right)\Big)
+\Big(M-\dim\left(\operatorname{span}(\Phi_{S^\mathrm{c}})\right)\Big)
-0\\
&\quad=2M-\operatorname{rank}\Phi_{S}-\operatorname{rank}\Phi_{S^\mathrm{c}}.
\end{align*}
As such, $\dim\left(\operatorname{span}(\Phi_{S})^\perp+\operatorname{span}(\Phi_{S^\mathrm{c}})^\perp\right)<M$ precisely when $\operatorname{rank}\Phi_{S}+\operatorname{rank}\Phi_{S^\mathrm{c}}>M$.
\end{proof}

At this point, we point out some interesting consequences of Theorem~\ref{thm.almost injective and sum rank >M}.
First of all, $\Phi$ cannot be almost injective if $N<M+1$ since $\operatorname{rank}\Phi_{S}+\operatorname{rank}\Phi_{S^\mathrm{c}}\leq|S|+|S^\mathrm{c}|=N$.
Also, in the case where $N=M+1$, we note that $\Phi$ is almost injective precisely when $\Phi$ is full spark, that is, every size-$M$ subcollection is a spanning set (note this implies that all of the $\varphi_n$'s are nonzero).
In fact, every full spark $\Phi$ with $N\geq M+1$ yields almost injective intensity measurements, which in turn implies that a generic $\Phi$ yields almost injectivity when $N\geq M+1$~\cite{BalanCE:06}.
This is in direct analogy with injectivity in the real case; here, injectivity requires $N\geq 2M-1$, injectivity with $N=2M-1$ is equivalent to being full spark, and being full spark suffices for injectivity whenever $N\geq 2M-1$~\cite{BalanCE:06}.
Another thing to check is that the condition for injectivity implies the condition for almost injectivity (it does).
Overall, we have the following phase transition result:

\begin{theorem}[essentially proved in~\cite{BalanCE:06}]
$\operatorname{AlmInj}[\Phi,\mathbb{R}^{M\times N}]$ exhibits a phase transition at $N=M+1$.
\end{theorem}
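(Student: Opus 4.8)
The plan is to verify the two defining conditions of a phase transition separately, using the rank characterization of Theorem~\ref{thm.almost injective and sum rank >M} together with the full-spark/Vandermonde argument already developed for injectivity in the real case.

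For part (a), I would fix $N\le M$ and argue that no $\Phi\in\mathbb{R}^{M\times N}$ is almost injective. Since $\Phi$ yields almost injectivity exactly when its nonzero members do, I may first discard any zero columns; this only shrinks the collection, so it retains at most $M$ vectors. Theorem~\ref{thm.almost injective and sum rank >M} then applies to the surviving (nonzero) vectors: almost injectivity would force $\Phi$ to span $\mathbb{R}^M$ and to satisfy $\operatorname{rank}\Phi_S+\operatorname{rank}\Phi_{S^\mathrm{c}}>M$ for every nonempty proper $S$. But for any such $S$ we have $\operatorname{rank}\Phi_S+\operatorname{rank}\Phi_{S^\mathrm{c}}\le|S|+|S^\mathrm{c}|=N\le M$, so the strict inequality fails and almost injectivity is impossible. (If the nonzero vectors fail even to span, we are done at once; if they span, then the count is exactly $M$, a basis, and deleting a single vector already drives the rank sum down to $M$.)

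For part (b), I would fix $N\ge M+1$ and take the desired open, dense subset to be the full spark matrices. The first step is to show full spark implies almost injectivity in this range. Full spark gives $\operatorname{rank}\Phi_T=\min(|T|,M)$ for every index set $T$, so every $\varphi_n$ is nonzero and $\Phi$ spans; it then remains to check $\min(|S|,M)+\min(|S^\mathrm{c}|,M)>M$ for each nonempty proper $S$, which follows from a brief case split on whether $|S|$ and $|S^\mathrm{c}|$ exceed $M$ (in the extreme cases one cardinality is capped at $M$ while the other, being nonempty, contributes at least $1$; in the remaining case the two cardinalities sum to $N\ge M+1$). Theorem~\ref{thm.almost injective and sum rank >M} then yields almost injectivity. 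The second step is the open-density argument reused verbatim from the injectivity discussion: the product of the determinants of all $M\times M$ submatrices is a polynomial in the entries of $\Phi$ vanishing exactly on the non-full-spark matrices, and it is not identically zero because the Vandermonde matrix displayed earlier is full spark; hence the full spark matrices are the complement of a proper real variety, an open dense set.

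The calculations throughout are light, so I do not anticipate a genuine obstacle; the one spot that rewards care is part (a), where the reduction to nonzero columns must be reconciled with the cardinality bound, and where the spanning clause of Theorem~\ref{thm.almost injective and sum rank >M} is precisely what eliminates the borderline basis case $N=M$.
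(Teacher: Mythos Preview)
Your proposal is correct and follows essentially the same approach as the paper: part~(a) uses Theorem~\ref{thm.almost injective and sum rank >M} together with the trivial bound $\operatorname{rank}\Phi_S+\operatorname{rank}\Phi_{S^\mathrm{c}}\le|S|+|S^\mathrm{c}|=N\le M$, and part~(b) invokes full spark (which guarantees the rank condition for $N\ge M+1$) combined with the same Vandermonde/determinant-product argument used earlier for open density. Your write-up is simply more explicit about the edge cases (discarding zero columns, the borderline basis case $N=M$, and the case split on $|S|$ versus $M$) than the paper's informal discussion preceding the theorem.
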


Having established that full spark ensembles of size $N\geq M+1$ yield almost injective intensity measurements, we note that checking whether a matrix is full spark is $\mathrm{NP}$-hard in general~\cite{Khachiyan:95}.
Granted, there are a few explicit constructions of full spark ensembles which can be used~\cite{AlexeevCM:12,PuschelK:05}, but it would be nice to have a condition which is not computationally difficult to test in general.
We provide one such condition in the following theorem, but first, we briefly review the requisite frame theory.

A \emph{frame} is an ensemble $\Phi=\{\varphi_n\}_{n=1}^N\subseteq\mathbb{R}^M$ together with \emph{frame bounds} $0<A\leq B<\infty$ with the property that for every $x\in\mathbb{R}^M$,
\begin{equation*}
A\|x\|^2
\leq\sum_{n=1}^N|\langle x,\varphi_n\rangle|^2
\leq B\|x\|^2.
\end{equation*}
When $A=B$, the frame is said to be \emph{tight}, and such frames come with a painless reconstruction formula:
\begin{equation*}
x=\frac{1}{A}\sum_{n=1}^N\langle x,\varphi_n\rangle\varphi_n.
\end{equation*}
To be clear, the theory of frames originated in the context of infinite-dimensional Hilbert spaces~\cite{DaubechiesGM:86,DuffinS:52}, and frames have since been studied in finite-dimensional settings, primarily because this is the setting in which they are applied computationally.
Of particular interest are so-called \emph{unit norm tight frames (UNTFs)}, which are tight frames whose frame elements have unit norm: $\|\varphi_n\|=1$ for every $n=1,\ldots,N$.
Such frames are useful in applications; for example, if one encodes a signal $x$ using frame coefficients $\langle x,\varphi_n\rangle$ and transmits these coefficients across a channel, then UNTFs are optimally robust to noise~\cite{GoyalVT:98} and one erasure~\cite{CasazzaK:03}.
Intuitively, this optimality comes from the fact that frame elements of a UNTF are particularly well-distributed in the unit sphere~\cite{BenedettoF:03}.
Another pleasant feature of UNTFs is that it is straightforward to test whether a given frame is a UNTF:
Letting $\Phi=[\varphi_1\cdots\varphi_N]$ denote an $M\times N$ matrix whose columns are the frame elements, then $\Phi$ is a UNTF precisely when each of the following occurs simultaneously:
\begin{itemize}
\item[(a)] the rows have equal norm
\item[(b)] the rows are orthogonal
\item[(c)] the columns have unit norm
\end{itemize}
(This is a direct consequence of the tight frame's reconstruction formula and the fact that a UNTF has unit-norm frame elements; furthermore, since the columns have unit norm, it is not difficult to see that the rows will necessarily have norm $\sqrt{N/M}$.)
In addition to being able to test that an ensemble is a UNTF, various UNTFs can be constructed using \emph{spectral tetris}~\cite{CasazzaFMWZ:11} (though such frames necessarily have $N\geq 2M$), and \emph{every} UNTF can be constructed using the recent theory of \emph{eigensteps}~\cite{CahillFMPS:13,FickusMPS:13}.
Now that UNTFs have been properly introduced, we relate them to almost injectivity for phase retrieval:

\begin{theorem}[Theorem~12 in~\cite{FickusMNW:13}]
\label{thm.almost injectivity and relatively prime UNTFs}
If $M$ and $N$ are relatively prime, then every unit norm tight frame $\Phi=\{\varphi_n\}_{n=1}^N\subseteq\mathbb{R}^M$ yields almost injective intensity measurements.
\end{theorem}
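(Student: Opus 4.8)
The plan is to invoke Theorem~\ref{thm.almost injective and sum rank >M}. Since a unit norm tight frame has unit-norm (hence nonzero) frame elements and spans $\mathbb{R}^M$, it suffices to verify the rank condition $\operatorname{rank}\Phi_{S}+\operatorname{rank}\Phi_{S^\mathrm{c}}>M$ for every nonempty proper subset $S\subseteq\{1,\ldots,N\}$. Because $\Phi$ spans, the spans of $\Phi_S$ and $\Phi_{S^\mathrm{c}}$ together span $\mathbb{R}^M$, so the inclusion-exclusion identity for subspaces already gives $\operatorname{rank}\Phi_{S}+\operatorname{rank}\Phi_{S^\mathrm{c}}\geq M$, with equality precisely when $\operatorname{span}(\Phi_S)\cap\operatorname{span}(\Phi_{S^\mathrm{c}})=\{0\}$. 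Thus the entire task reduces to ruling out this equality, and I would argue by contradiction: suppose $\operatorname{rank}\Phi_{S}+\operatorname{rank}\Phi_{S^\mathrm{c}}=M$ for some nonempty proper $S$, so that $\mathbb{R}^M=\operatorname{span}(\Phi_S)\oplus\operatorname{span}(\Phi_{S^\mathrm{c}})$ is an internal direct sum.

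The key structural step exploits the tight frame identity $\Phi_S\Phi_S^*+\Phi_{S^\mathrm{c}}\Phi_{S^\mathrm{c}}^*=\frac{N}{M}I_M$, which follows from $\Phi\Phi^*=\frac{N}{M}I_M$. Writing $T:=\Phi_S\Phi_S^*$, this is a positive semidefinite matrix whose range is $W_1:=\operatorname{span}(\Phi_S)$ and whose null space is therefore $W_1^\perp$; likewise $\frac{N}{M}I_M-T=\Phi_{S^\mathrm{c}}\Phi_{S^\mathrm{c}}^*$ has range $W_2:=\operatorname{span}(\Phi_{S^\mathrm{c}})$ and null space $W_2^\perp$. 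Reading off eigenvectors of the symmetric matrix $T$, every vector in $W_1^\perp$ is an eigenvector with eigenvalue $0$, while every vector in $W_2^\perp$ satisfies $Tv=\frac{N}{M}v$ and so is an eigenvector with eigenvalue $\frac{N}{M}\neq0$. Since eigenspaces for distinct eigenvalues of a symmetric matrix are orthogonal, and since the equality assumption forces $\dim W_1^\perp+\dim W_2^\perp=(M-\operatorname{rank}\Phi_S)+\operatorname{rank}\Phi_S=M$, these two eigenspaces are orthogonal complements. Hence $W_2^\perp=W_1$, meaning the decomposition is in fact \emph{orthogonal}, and $T$ acts as $\frac{N}{M}$ times the identity on $W_1$; that is, $\Phi_S\Phi_S^*=\frac{N}{M}P_{W_1}$, where $P_{W_1}$ denotes orthogonal projection onto $W_1$. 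In words, a trivial-intersection partition of a tight frame is automatically an orthogonal splitting into two smaller tight frames.

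The finish is a comparison of traces. On one hand $\operatorname{Tr}(\Phi_S\Phi_S^*)=\sum_{n\in S}\|\varphi_n\|^2=|S|$ since the frame elements have unit norm; on the other hand $\operatorname{Tr}\bigl(\frac{N}{M}P_{W_1}\bigr)=\frac{N}{M}\operatorname{rank}\Phi_S$. Equating gives $M\,|S|=N\operatorname{rank}\Phi_S$, so $M$ divides $N\operatorname{rank}\Phi_S$, and the hypothesis $\gcd(M,N)=1$ forces $M\mid\operatorname{rank}\Phi_S$. But $S$ and $S^\mathrm{c}$ are both nonempty, so $1\leq\operatorname{rank}\Phi_S\leq M-1$, which is incompatible with $M\mid\operatorname{rank}\Phi_S$. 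This contradiction establishes the strict inequality $\operatorname{rank}\Phi_{S}+\operatorname{rank}\Phi_{S^\mathrm{c}}>M$ for every nonempty proper $S$, and Theorem~\ref{thm.almost injective and sum rank >M} then yields almost injectivity.

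I expect the eigenspace argument in the middle paragraph to be the crux. The real content is that trivial intersection of the two spans upgrades the direct-sum decomposition to an \emph{orthogonal} one on which each block of the frame is itself tight; this is exactly the leverage needed to convert the coprimality hypothesis into the clean divisibility obstruction $M\mid\operatorname{rank}\Phi_S$ via the trace. Everything else (the spanning reduction, the inclusion-exclusion lower bound, and the unit-norm trace computation) is routine bookkeeping.
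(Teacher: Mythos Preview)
Your proof is correct and follows essentially the same approach as the paper's: both invoke Theorem~\ref{thm.almost injective and sum rank >M}, use the tight frame identity $\Phi_S\Phi_S^*+\Phi_{S^\mathrm{c}}\Phi_{S^\mathrm{c}}^*=\tfrac{N}{M}I$ to analyze the equality case, and extract a divisibility contradiction from the trace and the coprimality of $M$ and $N$. The only difference is presentational---the paper simultaneously diagonalizes the two Gram-like matrices and argues about disjoint supports of diagonal entries, whereas you reason directly with eigenspaces to deduce that the partition is orthogonal (which, incidentally, is exactly the ``orthogonally partitionable'' condition the paper discusses immediately after the theorem).
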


\begin{proof}
Pick a nonempty proper subset $S\subseteq\{1,\ldots,N\}$.
By Theorem~\ref{thm.almost injective and sum rank >M}, it suffices to show that
$\operatorname{rank}\Phi_{S}+\operatorname{rank}\Phi_{S^\mathrm{c}}>M$,
or equivalently,
$\operatorname{rank}\Phi_{S}\Phi_{S}^*+\operatorname{rank}\Phi_{S^\mathrm{c}}\Phi_{S^{\mathrm{c}}}^*>M$.
Note that since $\Phi$ is a unit norm tight frame, we also have
\begin{equation*}
\Phi_{S}\Phi_{S}^*+\Phi_{S^\mathrm{c}}\Phi_{S^{\mathrm{c}}}^*=\Phi\Phi^*=\tfrac{N}{M}I,
\end{equation*}
and so $\Phi_{S}\Phi_{S}^*$ and $\Phi_{S^\mathrm{c}}\Phi_{S^{\mathrm{c}}}^*$ are simultaneously diagonalizable, i.e., there exists a unitary matrix $U$ and diagonal matrices $D_1$ and $D_2$ such that
\begin{align*}
UD_1U^*+UD_2U^*=\Phi_{S}\Phi_{S}^*+\Phi_{S^\mathrm{c}}\Phi_{S^{\mathrm{c}}}^*=\tfrac{N}{M}I.
\end{align*}
Conjugating by $U^*$, this then implies that $D_1+D_2=\tfrac{N}{M}I$.
Let $L_1\subseteq\{1,\ldots,M\}$ denote the diagonal locations of the nonzero entries in $D_1$, and $L_2\subseteq\{1,\ldots,M\}$ similarly for $D_2$.
To complete the proof, we need to show that $|L_1|+|L_2|>M$ (since $|L_1|+|L_2|=\operatorname{rank}\Phi_{S}\Phi_{S}^*+\operatorname{rank}\Phi_{S^\mathrm{c}}\Phi_{S^{\mathrm{c}}}^*$).
Note that $L_1\cup L_2\neq\{1,\ldots,M\}$ would imply that $D_1+D_2$ has at least one zero in its diagonal, contradicting the fact that $D_1+D_2$ is a nonzero multiple of the identity; as such, $L_1\cup L_2=\{1,\ldots,M\}$ and $|L_1|+|L_2|\geq M$.
We claim that this inequality is strict due to the assumption that $M$ and $N$ are relatively prime.
To see this, it suffices to show that $L_1\cap L_2$ is nonempty.
Suppose to the contrary that $L_1$ and $L_2$ are disjoint.
Then since $D_1+D_2=\tfrac{N}{M}I$, every nonzero entry in $D_1$ must be $N/M$.
Since $S$ is a nonempty proper subset of $\{1,\ldots,N\}$, this means that there exists $K\in(0,M)$ such that $D_1$ has $K$ entries which are $N/M$ and $M-K$ which are $0$.
Thus,
\begin{equation*}
|S|
=\operatorname{Tr}[\Phi_{S}^*\Phi_{S}]
=\operatorname{Tr}[\Phi_{S}\Phi_{S}^*]
=\operatorname{Tr}[UD_1U^*]
=\operatorname{Tr}[D_1]
=K(N/M),
\end{equation*}
implying that $N/M=|S|/K$ with $K\neq M$ and $|S|\neq N$.
Since this contradicts the assumption that $N/M$ is in lowest form, we have the desired result.
\end{proof}

In general, whether a UNTF $\Phi$ yields almost injective intensity measurements is determined by whether it is \emph{orthogonally partitionable}: $\Phi$ is orthogonally partitionable if there exists a partition $S\sqcup S^\mathrm{c}=\{1,\ldots,N\}$ such that $\operatorname{span}(\Phi_S)$ is orthogonal to $\operatorname{span}(\Phi_{S^\mathrm{c}})$.
Specifically, a UNTF yields almost injective intensity measurements precisely when it is not orthogonally partitionable.
Historically, this property of UNTFs has been pivotal to the understanding of singularities in the algebraic variety of UNTFs~\cite{DykemaS:06}, and it has also played a key role in solutions to the Paulsen problem~\cite{BodmannC:10,CasazzaFM:12}.
However, it is not clear in general how to efficiently test for this property; this is why Theorem~\ref{thm.almost injectivity and relatively prime UNTFs} is so powerful.

\subsection{Almost Injectivity in the Complex Case}

The complex case is not understood nearly as well as the real case, but the phase transition is arguably better understood than the one for injectivity in the complex case.
However, almost injectivity hasn't received as much attention, so there are no known characterizations in the complex case, let alone ``useful'' ones.
To begin our discussion of the phase transition, we consider the following lemma (the proof is enjoyable):

\begin{theorem}
\label{theorem.differential geometry bound for almost injectivity}
Suppose $\mathcal{A}\colon\mathbb{R}^P\rightarrow\mathbb{R}^N$ has a continuous Jacobian $J$ over some open set $U\subseteq\mathbb{R}^P$.
If $\operatorname{rank}(J(x))<P$ for every $x\in U$, then $\mathcal{A}$ is not injective when restricted to $U$.
\end{theorem}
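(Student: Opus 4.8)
The plan is to exploit the fact that the rank hypothesis forces the kernel of the derivative to be nontrivial at every point, and then to flow along this kernel so as to produce a nonconstant curve on which $\mathcal{A}$ is constant. First I would reduce to the case of locally constant rank. The function $x\mapsto\operatorname{rank}(J(x))$ is lower semicontinuous (having rank at least $k$ is the open condition that some $k\times k$ minor is nonzero), so if $r$ denotes its maximum value over $U$---attained because the rank is integer-valued and bounded by $\min(N,P)$---then the set $U'$ on which $\operatorname{rank}(J(x))=r$ is open and nonempty, and $r<P$ by hypothesis. It suffices to show that $\mathcal{A}$ is not injective on $U'$, so I may assume the rank is constantly equal to $r<P$ on $U$.

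On this set the kernel $\ker J(x)\subseteq\mathbb{R}^P$ is a subspace of constant dimension $P-r\geq1$. Fixing a point $p\in U$ and a nonzero vector $w\in\ker J(p)$, I would define $v(x)$ to be the orthogonal projection of $w$ onto $\ker J(x)$. Because the rank is locally constant, the orthogonal projection onto the row space of $J(x)$ (equivalently, onto $(\ker J(x))^\perp$) depends continuously on $x$, so $v$ is continuous; moreover $v(p)=w\neq0$, so $v$ is nonvanishing on a neighborhood of $p$. By construction $v(x)\in\ker J(x)$, and hence $J(x)v(x)=0$ throughout this neighborhood.

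Next I would invoke the Peano existence theorem: since $v$ is continuous, the initial value problem $\dot\gamma(t)=v(\gamma(t))$, $\gamma(0)=p$, has a $C^1$ solution $\gamma$ on some interval about $0$. I deliberately use Peano rather than Picard--Lindel\"of, since the hypothesis supplies only continuity of $J$, hence only continuity of $v$, and uniqueness of the integral curve is never needed. The chain rule then gives $\frac{d}{dt}\mathcal{A}(\gamma(t))=J(\gamma(t))\dot\gamma(t)=J(\gamma(t))v(\gamma(t))=0$, so $\mathcal{A}\circ\gamma$ is constant. On the other hand $\dot\gamma(0)=v(p)=w\neq0$, so $\gamma$ is nonconstant near $0$; thus $\gamma(0)$ and $\gamma(\varepsilon)$ are distinct points for all sufficiently small $\varepsilon>0$, while $\mathcal{A}(\gamma(0))=\mathcal{A}(\gamma(\varepsilon))$, which witnesses non-injectivity of $\mathcal{A}$ on $U$.

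The main obstacle is the construction of a continuous nonvanishing kernel field, which is precisely what fails if the rank is allowed to jump---hence the reduction to the open set of maximal rank is essential. A secondary point deserving care is regularity: with only a continuous Jacobian one cannot appeal to the inverse function or constant-rank theorems in their usual smooth form, so the integral-curve argument (using Peano existence together with the chain rule for $C^1$ maps) is the natural route and sidesteps any need for uniqueness of solutions.
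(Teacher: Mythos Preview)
Your proposal is correct and follows essentially the same approach as the paper: both arguments restrict to a neighborhood of a point of maximal rank, build a continuous nonvanishing kernel vector field by orthogonally projecting a fixed nonzero kernel vector onto $\ker J(x)$, invoke the Peano existence theorem to obtain an integral curve, and then apply the chain rule to see that $\mathcal{A}$ is constant along that curve. The paper differs only in that it writes the kernel projection explicitly via a selected full-rank subcollection of columns of $J$, whereas you appeal abstractly to continuity of the projection under constant rank; your added remarks on why the maximal-rank reduction is essential and why Peano (rather than Picard--Lindel\"of) is the appropriate existence theorem are accurate and worth keeping.
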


\begin{proof}
Let $z$ be a point in $U$ which maximizes $\operatorname{rank}(J(x))$, and let $K$ denote the rank of $J(z)$.
Then there are $K$ linearly independent columns of $J(z)$ forming the submatrix $J_\mathcal{K}(z)$.
Furthermore, these columns remain linearly independent in any sufficiently small neighborhood $B$ of $z$ in $U$.
(This can be established using the continuous mapping $x\mapsto \det[(J_\mathcal{K}(x))^*J_\mathcal{K}(x)]$.)
As such, we can define the continuous mapping
\begin{equation*}
x\mapsto P(x):=I-J_\mathcal{K}(x)[(J_\mathcal{K}(x))^*J_\mathcal{K}(x)]^{-1}(J_\mathcal{K}(x))^*
\end{equation*}
for all $x\in B$.
By construction, $P(x)$ is the orthogonal projection onto the null space of $J(x)$.
Pick some nonzero member $v$ of the null space of $J(z)$, and consider the continuous vector field $x\mapsto P(x)v$ over $B$.
By the Peano existence theorem, there exists $\epsilon>0$ and $\gamma\colon[0,\epsilon]\rightarrow B$ such that $\gamma(0)=z$ and $\gamma'(t)=P(\gamma(t))v$ for every $t\in[0,\epsilon]$.
Since $\gamma'(t)=P(\gamma(t))v$ is in the null space of $J(\gamma(t))$, we then have
\begin{equation*}
0=J(\gamma(t))\gamma'(t)=\frac{d}{dt}\Big(\mathcal{A}(\gamma(t))\Big)
\end{equation*}
for every $t\in[0,\epsilon]$, meaning $\mathcal{A}(x)$ is constant over all $x\in\gamma([0,\epsilon])$.
Furthermore, $\gamma([0,\epsilon])$ contains more than a single point, since otherwise $\gamma$ is constant, contradicting $\gamma'(0)=v\neq0$.
As such, $\mathcal{A}$ is not injective over any sufficiently small neighborhood of $z$, let alone $U$.
\end{proof}

Take an almost injective intensity measurement mapping $\mathcal{A}$ and restrict it to an open set $S$ of $x$'s for which $\mathcal{A}^{-1}(\mathcal{A}(x))=\{\omega x:|\omega|=1\}$.
Note that $\mathcal{A}$ is injective over $S$ by assumption.
Considering $(\mathbb{C}^M\setminus\{0\})/\mathbb{T}$ is a smooth manifold of real dimension $P=2M-1$, we can intersect $S$ with a patch to get an open set $U$, and consider the Jacobian of $\mathcal{A}$ in the patch's local coordinates.
By the contrapositive of Theorem~\ref{theorem.differential geometry bound for almost injectivity}, we have that $N\geq\operatorname{rank}(J(x))\geq P=2M-1$ for some $x\in U$.
This may lead one to believe that $\operatorname{AlmInj}[\Phi,\mathbb{C}^{M\times N}]$ exhibits a phase transition at $N=2M-1$, but there is evidence to suggest that is this off by $1$ due to algebraic properties of intensity measurements:

\begin{conjecture}
$\operatorname{AlmInj}[\Phi,\mathbb{C}^{M\times N}]$ exhibits a phase transition at $N=2M$.
\end{conjecture}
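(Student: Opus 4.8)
The plan is to reduce the problem to a map between a closed manifold and a Euclidean space, and to read off the threshold $N=2M$ from a comparison of dimensions. Since almost injectivity is preserved under any invertible linear change of the signal variable (such a map is a diffeomorphism of $\mathbb{C}^M$), and since a non-spanning $\Phi$ is never almost injective, I would assume without loss of generality that $\Phi$ is Parseval, i.e.\ $\Phi\Phi^*=I$. Then $\sum_{n=1}^N|\langle x,\varphi_n\rangle|^2=\|x\|^2$, so $\mathcal{A}(x)=\mathcal{A}(y)$ forces $\|x\|=\|y\|$ and every collision occurs between signals of equal norm. Restricting to the unit sphere and passing to the quotient yields a smooth map $\mathcal{E}\colon\mathbb{C}\mathbf{P}^{M-1}\to\mathbb{R}^N$ whose image lies in the affine hyperplane $\{c:\sum_n c_n=1\}$, and $\mathcal{A}$ is almost injective if and only if $\mathcal{E}$ is injective on an open dense subset of $\mathbb{C}\mathbf{P}^{M-1}$. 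The key numerology is that $\dim_{\mathbb{R}}\mathbb{C}\mathbf{P}^{M-1}=2M-2$, while the hyperplane has dimension $N-1$; the transition should occur exactly where $N-1$ crosses $2M-2$.

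For part (a) I would split into two cases. When $N\leq 2M-2$, the Jacobian of $\mathcal{A}$ in any local chart has at most $N\leq 2M-2$ rows, hence rank strictly less than $P=2M-1$ everywhere, so Theorem~\ref{theorem.differential geometry bound for almost injectivity} shows $\mathcal{A}$ fails to be injective on every open set, and almost injectivity fails for \emph{every} $\Phi$. The interesting case is $N=2M-1$, where $\mathcal{E}$ becomes an equidimensional map $\mathbb{C}\mathbf{P}^{M-1}\to\mathbb{R}^{2M-2}$. Here I would invoke the mod-$2$ degree: a smooth map from a closed manifold into a connected, non-compact manifold of the same dimension has an even number of preimages over every regular value, since its image is compact and hence misses some regular value, forcing the degree to vanish. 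Thus, wherever $\mathcal{E}$ is a submersion -- an open set on which its image has nonempty interior -- regular values in the image have at least two preimages, and the signals mapping to such values form a dense subset of an open set on which $\mathcal{A}$ is two-to-one. (If $\mathcal{E}$ is nowhere a submersion, Theorem~\ref{theorem.differential geometry bound for almost injectivity} again applies in charts.) Either way, no $\Phi$ with $N=2M-1$ is almost injective, which is precisely the improvement over the differential-geometric bound $N\geq 2M-1$.

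For part (b) I would use the lifted description from Theorem~\ref{lem.not injective rank 2}: a signal $x$ is ``bad'' exactly when there is a nonzero rank-$2$ matrix $Q=xx^*-yy^*$ (necessarily of signature $(1,1)$) in the null space of $\mathbf{A}$. The rank-$\leq 2$ self-adjoint matrices of signature $(1,1)$ form a real variety of dimension $4M-4$, and for generic $\Phi$ the codimension-$N$ subspace $\operatorname{null}(\mathbf{A})$ meets it in dimension at most $(4M-4)-N$. Recovering $x$ from such a $Q$ adds three real parameters (the positive sheet $\{v\in\mathbb{C}^2:|v_1|^2-|v_2|^2=1\}$ of decompositions of $Q$ on its range), so the set of bad signals has dimension at most $4M-1-N$. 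For $N\geq 2M$ this is at most $2M-1<\dim_{\mathbb{R}}\mathbb{C}^M$, so the bad set is a lower-dimensional semialgebraic set, nowhere dense, and $\mathcal{A}$ is almost injective.

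The hard part will be making part (b) rigorous over $\mathbb{R}$. As the cone example in the text illustrates, one cannot simply invoke a projective dimension theorem to control the intersection of $\operatorname{null}(\mathbf{A})$ with the rank-$\leq 2$ variety; the intersection can be smaller than expected (which only helps), but one must still certify that it is not larger for an open dense set of $\Phi$. I would therefore phrase the count as an incidence variety over the space of all $\Phi$, use upper semicontinuity of fiber dimension to confine the non-almost-injective $\Phi$ to a proper real subvariety, and then exhibit a single explicit $M\times 2M$ witness whose bad set is genuinely $(2M-1)$-dimensional. This nonemptiness-of-the-complement step, exactly as in the treatment of part (b) of the $4M-4$ conjecture, is where the real work lies.
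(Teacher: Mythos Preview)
The statement you are addressing is presented in the paper as a \emph{conjecture}, not a theorem; the paper supplies no proof of its own. It records that part~(b) was established by Balan, Casazza and Edidin, while for part~(a) only a sketch exists in the quantum literature which, in the paper's words, ``leaves much to be desired.'' All the paper itself proves is the weaker bound $N\geq 2M-1$, via Theorem~\ref{theorem.differential geometry bound for almost injectivity}. So there is nothing in the paper to compare your argument against; you are proposing to close a gap the paper explicitly leaves open.

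That said, your attack on part~(a) at the critical value $N=2M-1$ is a genuinely nice idea and appears to work. After the Parseval reduction (valid, since almost injectivity is invariant under $x\mapsto(\Phi\Phi^*)^{-1/2}x$ and non-spanning $\Phi$ fail trivially), $\mathcal{E}$ sends the closed $(2M-2)$-manifold $\mathbb{C}\mathbf{P}^{M-1}$ into the non-compact $(2M-2)$-dimensional hyperplane $\{c:\sum_n c_n=1\}$, so its mod-$2$ degree vanishes and every regular value has an even number of preimages. Your passage from this to ``almost injectivity fails'' should be tightened: rather than asserting an open set on which $\mathcal{A}$ is two-to-one, argue by contradiction. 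If $\mathcal{E}$ were injective on an open dense $G$, intersect $G$ with the (open, nonempty by Theorem~\ref{theorem.differential geometry bound for almost injectivity}) set of regular points, restrict $\mathcal{E}$ to a small neighborhood where it is a local diffeomorphism, use Sard to find a regular value $c$ in the image, and observe that its unique preimage in that neighborhood lies in $G$, forcing $\#\mathcal{E}^{-1}(c)=1$, contradicting evenness. This is routine once stated, and would upgrade the paper's $2M-1$ bound to the conjectured $2M$.

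For part~(b), your incidence-variety dimension count is exactly the Conca--Edidin--Hering--Vinzant template the paper summarizes for the $4M-4$ conjecture, transported to the almost-injective setting. It is a reasonable alternative to whatever Balan--Casazza--Edidin do, and you have correctly located the only nontrivial step: over $\mathbb{R}$ the expected dimension $(4M-4)-N$ for $\operatorname{null}(\mathbf{A})\cap\{\text{rank}\leq 2,\ \text{signature }(1,1)\}$ is only an upper bound generically, so you need upper-semicontinuity of fiber dimension on the incidence variety together with one explicit $M\times 2M$ ensemble witnessing that the generic stratum is nonempty.
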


To be clear, part (b) of this conjecture was proved by Balan, Casazza and Edidin~\cite{BalanCE:06}, whereas a sketch of the proof of part (a) is provided in~\cite{FlammiaSC:05}.
However, the latter sketch leaves much to be desired -- while the argument is believable in principle, it is unclear whether their use of real algebraic geometry is sufficiently rigorous.
For explicit minimal constructions in this case (assuming the conjecture is true), see~\cite{Finkelstein:04,FlammiaSC:05}.

\section*{Acknowledgments}
The author was supported by NSF Grant No.\ DMS-1321779.
The views expressed in this chapter are those of the author and do not reflect the official policy or position of the United States Air Force, Department of Defense, or the U.S.\ Government.

\end{document}